\theoremstyle{plain} 
\newtheorem{thm}{Theorem}
\newtheorem{prop}{Proposition}
\theoremstyle{definition}
\theoremstyle{remark}
\newcommand{\RR}{\mathbb{R}}
\newcommand{\E}{\mathsf{E}}
\newcommand{\prob}{\mathsf{P}}
\newcommand{\Qset}{\mathscr{Q}}
\newcommand{\eps}{\varepsilon}
\newcommand{\logit}{\mathrm{logit}}
\newcommand{\trace}{\mathrm{tr}}
\newcommand{\nm}{\mathsf{N}}
\newcommand{\ber}{\mathsf{Ber}}
\newcommand{\invgamma}{\mathsf{IG}}
\newcommand{\oracle}{{\text{\sc or}}}
\title{Variational approximations of empirical Bayes posteriors in high-dimensional linear models}
\author{Yue Yang\footnote{Department of Statistics, North Carolina State University} \; and \; Ryan Martin$^*$}
\date{\today}
\begin{document}

\maketitle 


\begin{abstract}
In high-dimensions, the prior tails can have a significant effect on both posterior computation and asymptotic concentration rates.  To achieve optimal rates while keeping the posterior computations relatively simple, an empirical Bayes approach has recently been proposed, featuring thin-tailed conjugate priors with data-driven centers.  While conjugate priors ease some of the computational burden, Markov chain Monte Carlo methods are still needed, which can be expensive when dimension is high.  In this paper, we develop a variational approximation to the empirical Bayes posterior that is fast to compute and retains the optimal concentration rate properties of the original.  In simulations, our method is shown to have superior performance compared to existing variational approximations in the literature across a wide range of high-dimensional settings.  
\smallskip

\emph{Keywords and phrases:} Coordinate ascent variational inference; empirical prior; posterior concentration rate; variable selection.
\end{abstract}

\section{Introduction}

Consider the standard Gaussian linear regression model
\begin{equation}
\label{eq:reg.model}
y_i = x_i^\top \beta + \sigma \epsilon_i, \quad i=1,\ldots,n, \quad \text{independent}, 
\end{equation}
where $y_i$ is the response variable, $x_i=(x_{i1},\ldots,x_{ip})^\top \in \RR^p$ is a given vector of predictor variables, $\beta \in \RR^p$ is an unknown vector of regression coefficients, $\sigma > 0$ is an unknown scale parameter, and $\epsilon_i \sim \nm(0,1)$ is the random error term.  In matrix form, this can be written succinctly as $y \sim \nm_n(X\beta, \sigma^2 I_n)$, where $y=(y_1,\ldots,y_n)^\top$ is the vector of response variables and $X$ is the $n \times p$ matrix with $x_i$ as its $i^\text{th}$ row, $i=1,\ldots,n$.  
We are particularly interested in high-dimensional cases, where $p\gg n$.  Without assuming some low-dimensional structure in $\beta$, accurate estimation is hopeless.  As is customary in the literature, here we assume that $\beta$ is {\em sparse} in the sense that most of $\beta_i$'s are zero, but, of course, we do not know how many or which ones are zero.  This sparsity assumption aligns with the belief, common in scientific applications, that only a few of the many predictor variables actually affect the mean response.  Estimating the sparse, high-dimensional $\beta$ vector and/or identifying which entries in $\beta$ are non-zero, i.e., {\em variable selection}, are important problems with many different solutions that have been widely studied.  Regularization-based methods, including lasso \citep{tibshirani1996regression}, adaptive lasso \citep{zou2006adaptive}, elastic net \citep{zou2005regularization}, and SCAD \citep{fan2001variable}, impose different penalty functions on $\beta$ to take advantage of the sparsity assumption.  Bayesian methods instead focus on different choice of prior distributions, such as the normal mixture prior adopted in \citet{george1993variable}, spike-and-slab priors used in \citet{ishwaran2005spike} and \citet{castillo2015bayesian}, the continuous horseshoe prior used in \citet{carvalho2010horseshoe} and \citet{polson2012local}, and the empirical or data-driven priors in \citet{martin2017empirical}, \citet{martin2019empiricalpredict}, and \citet{ecap}.  One obvious advantage to the use of Bayesian methods is that they return an entire posterior distribution for $\beta$, from which lots of interesting and useful summaries can be derived.  The price one pays for this, however, is computational.  That is, the posterior distribution is not available in closed-form and, therefore, must be approximated.  The most common approximation is via Markov chain Monte Carlo (MCMC), but this is well known to be both expensive and inaccurate when $n$ and/or $p$ are large.  An alternative to MCMC is the class of {\em variational approximations}, designed specifically for computational efficiency, is the focus of the present paper.  


Roughly, the variational approach proceeds by first identifying a sufficiently rich yet analytically tractable class of distributions and then choosing the member of that class closest to the posterior distribution with respect to some discrepancy measure.  The computational efficiency gain is a result of converting a difficult integration problem into an optimization problem for which fast algorithms are available.  The most common method is coordinate ascent variational inference \citep{blei2003latent}, which uses  coordinate ascent to minimize the Kullback--Leibler divergence between the mean-field variational family and the true posterior distribution.  Furthermore, stochastic variational inference \citep{hoffman2013stochastic}, black box variational inference \citep{ranganath2014black}, and doubly stochastic variational inference \citep{titsias2014doubly} allow for variational approximations to be applied more generally.  \cite{blei2017variational} gives an authoritative review of variational approximations for Bayesian inference.  Besides the computational efficiency of variational methods, there has been recent interest in the asymptotic theory, e.g.,  \citet{wang2019frequentist,yang2020alpha,alquier2020concentration}.

Variational approximations have been developed for the variable selection problem being considered here.  In particular, \citet{carbonetto2012scalable} define a simple-but-effective variational family to approximate the posterior derived from Gaussian spike-and-slab priors and integrate out hyper-parameters using importance sampling; \citet{huang2016variational} focus on similar spike-and-slab model but update hyper-parameters with maximum a posterior estimate and also propose a novel batch-wise algorithm; and \citet{ormerod2017variational} assume $\sigma^2$ has an inverse gamma distribution and they derive a corresponding update equation. While the above three variational methods all consider spike-and-slab prior with mean zero Gaussian slabs, \citet{ray2019variational} focus on a prior with Laplace slabs.  Their motivation is Theorem~2.8 in \citet{castillo2012needles}, i.e., that Gaussian slabs lead to sub-optimal posterior concentration rates, which suggests the use of a prior with heavier-than-Gaussian tails.  Starting with Laplace instead of Gaussian slabs, \cite{ray2019variational} develop corresponding variational approximations and algorithms, and prove that their proposed approximate posterior distribution enjoys some of the same desirable asymptotic concentration properties as the full posterior.  However, like with its MCMC counterpart, the Laplace prior tails create some computational challenges for the variational approximation.  In particular, the update equations are not available in closed-form, so numerical methods are required at each iteration.  


In this paper, following the insights in \citet{martin2017empirical} and \citet{martin.walker.deb}, we consider posterior distributions obtained by Bayesian updating of suitable {\em empirically-centered} Gaussian priors.  The advantage of these empirical priors is two-fold: they enjoy the computational simplicity and efficiency of thin-tailed conjugate priors and have the optimal posterior concentration rates of heavy-tailed priors.  Although prior conjugacy leads to some computational savings, unfortunately, there is still a need for MCMC methods, which can be expensive when $n$ and/or $p$ are large.  Therefore, like \citet{ray2019variational}, our goal here is to develop a fast variational approximation to this empirical Bayes posterior, one that avoids MCMC altogether.  Moreover, this approximation should not sacrifice on the desirable concentration rate properties of the posterior it is approximating.  After a brief review of the empirical prior formulation from \citet{martin2017empirical} and \citet{martin2019empiricalpredict}, in Section~\ref{Model_intro} we present our variational approximation, its corresponding asymptotic theory, and our algorithm for evaluating that approximation.  Numerical comparisons of our proposed variational method with others for high-dimensional regression are presented in Section~\ref{Simulation}, and there we demonstrate that our method has superior performance across a range of settings.  In Section~\ref{orthogonal} we consider the special case of regression with an orthogonal design matrix, where the variational approximation is sufficiently simple that it allows for further asymptotic concentration properties to be demonstrated, namely, selection consistency and valid uncertainty quantification, under suitable conditions.  Some concluding remarks are given in Section~\ref{S:discuss}, and technical details and proofs are collected in four appendices.


\section{High-dimensional regression}
\label{Model_intro}

\subsection{Empirical prior and the corresponding posterior}
\label{SS:posterior}

Here we adopt the empirical prior formulation as presented in \citet{martin2017empirical} and \citet{martin2019empiricalpredict}.  In particular, we decompose the sparse, high-dimensional vector $\beta$ as $(S, \beta_S)$, where $S \subseteq \{1,2,\ldots,p\}$ is the set of non-zero coefficients, called the {\em configuration} of $\beta$, and $\beta_S$ is the $|S|$-vector of non-zero values, with $|S|$ denoting the cardinality of $S$. We first define prior $\pi(S)$ for the configuration $S$ as
\[ \pi(S)=\textstyle \binom{p}{|S|}^{-1} f_n(|S|),\]
where $f_n(s)$ is a prior on the configuration size $|S|$.  A number of different options for $f_n$ are available; see \citet{castillo2015bayesian}.  One is a suitable beta-binomial prior, but here we will focus on   
\begin{equation}
\label{eq:prior.size}
f_n(s)\propto c^{-s}p^{-as}, \quad s=0,1,\ldots,R,
\end{equation}
where $a$ and $c$ are positive constants and $R = \text{rank}(X)$.  From now on, for simplicity and consistency with the majority of the literature in this area, we will assume that $R=n$; but see \citet{abramovich.grinshtein.2010}.  

For the conditional prior for $\beta_S$, given $S$, \citet{castillo2012needles} showed that thin Gaussian tails can lead to sub-optimal posterior concentration rates, which motivated \citet{castillo2015bayesian} to consider a heavier-tailed Laplace prior.  While the optimal posterior concentration rates can be established with the heavy-tailed conditional prior for $\beta_S$, given $S$, there is a price to pay in terms of posterior computation; a result of the Laplace prior being non-conjugate to the normal likelihood.  But the effect of the prior tails can be reduced considerably by allowing the data to inform the prior center.  Indeed, \citet{martin2017empirical} observed that, with an appropriate empirical Gaussian prior, conjugacy and optimal posterior concentration rate properties could be achieved.  Following their idea, we take the conditional prior for $\beta_S$, given $S$, as 
\begin{equation}
\beta_S \mid S,\sigma^2 \sim \pi_n(\beta_S \mid S) :=  \nm(\hat{\beta}_S,\gamma^{-1}\sigma^2(X_S^\top X_S)^{-1}),
\label{prior_emp}
\end{equation}
where $X_S$ is the sub-matrix corresponding to the configuration $S$, $\hat\beta_S = (X_S^\top X_S)^{-1} X_S^\top y$ is the least squares estimator based on design matrix $X_S$, $\sigma^2$ is the error variance, and $\gamma > 0$ is a scalar tuning parameter that controls the prior spread; see, also, \citet{belitser.ghosal.ebuq}.  For the moment, we will treat $\sigma^2$ as fixed---either at its true value or at a plug-in estimator---but see below.   

For the Gaussian linear regression model, with $\sigma^2$ fixed, the likelihood at $\beta \equiv (S,\beta_S)$ is given by $L_n(S,\beta_S) = \exp\{-\tfrac{1}{2\sigma^2} \|y - X_S\beta_S\|^2\}$.  Then \citet{martin2017empirical} propose the following joint posterior distribution for $(S,\beta_S)$, 
\begin{equation}
\label{joint_post}
\pi^n(S,\beta_S) \propto \tilde\pi^n(S, \beta_S) := L_n^\alpha(S,\beta_S) \, \pi_n(\beta_S \mid S) \, \pi(S), 
\end{equation}
where $\tilde \pi^n$ is the unnormalized posterior distribution, and the proportionality constant that goes in to $\pi^n$ is determined by summing/integrating over all $(S,\beta_S)$.  The power $\alpha \in (0,1)$, which can be arbitrarily close to 1, is an extra regularization factor preventing the posterior---that depends on data through both the likelihood and prior---from over-fitting.  An important consequence of the prior conjugacy is that the marginal posterior distribution for the configuration $S$ is available is nearly closed-form:
\[ \pi^n(S) \propto 
 \tilde\pi^n(S) := \pi(S) \, \bigl(\tfrac{\gamma}{\alpha+\gamma}\bigr)^{|S|/2} \exp\bigl\{-\tfrac{\alpha}{2\sigma^2}\|y-\hat y_S\|^2\bigr\}, \]
where $\hat y_S = X_S \hat\beta_S$ is the least squares fitted value based on configuration $S$. The above expression is the driver behind the MCMC algorithm presented in \citet{martin2017empirical} for sampling from the posterior $\pi^n$ for $(S,\beta_S)$.  They also established a number of desirable asymptotic posterior concentration rate results.  These will be used to prove similar results for the variational approximation developed in Section~\ref{SS:var.approx} below, so a brief summary is presented in Appendix~\ref{summarytheory}.  

In applications, fixing $\sigma^2$ at the true value or at a plug-in estimator may not be fully satisfactory, so \citet{martin2019empiricalpredict} proposed the use of a prior distribution.  In particular, they suggested an inverse gamma prior, $\sigma^2\sim \invgamma(a_0,b_0)$, where $a_0$ and $b_0$ are pre-specified shape and scale parameters.  A very similar marginal posterior for $(S,\beta_S)$ can be developed based on this larger model, but we will not need most of this in what follows.  All that will be relevant to our developments is the corresponding marginal posterior distribution for $S$ which, using the same notation as above, is given by 
\begin{equation}
\label{marginal}
\pi^n(S)\propto \tilde\pi^n(S) := \pi(S) \bigl(\tfrac{\gamma}{\alpha+\gamma}\bigr)^{|S|/2} \bigl(b_0+\tfrac{\alpha}{2} \|y-\hat{y}_S\|^2\bigr)^{-(a_0+\alpha n/2)}.
\end{equation}
This expression will be used in our algorithm for solving the optimization problem that determines our variational approximation; see Section~\ref{SS:algorithm}.  



\subsection{Variational approximation}
\label{SS:var.approx}


With a slight abuse of notation, instead of treating $S$ as a subset of $\{1,2,\ldots,p\}$, treat it as a binary vector, where $S_j=1$ if $j \in S$ and $S_j=0$ otherwise.  Alternatively, we have $\beta_j \neq 0$ if $S_j=1$ and $\beta_j = 0$ otherwise.  One factor that makes computation of the original posterior $\pi^n$ relatively difficult is that the pairs $(S_j,\beta_j)$, $j=1,\ldots,p$, are not independent.  However, for a quick and simple variational approximation, we propose to ignore this dependence and work with a parametric family of the form
\[ q_\theta(S,\beta) = \prod_{j=1}^p q_{j,\theta}(\beta_j \mid S_j) \, q_{j,\theta}(S_j), \]
where $\theta$ is a finite-dimensional parameter to be chosen, which assumes independence across $j$.  Specifically, we take 
\[ q_{j,\theta}(S_j) = \begin{cases} \phi_j & \text{if $S_j=1$} \\ 1-\phi_j & \text{if $S_j=0$} \end{cases} \]
and 
\[ q_{j,\theta}(\beta_j \mid S_j) = \begin{cases} \nm(\beta_j \mid \mu_j, \tau_j^2) & \text{if $S_j=1$} \\ \delta_0(\beta_j) & \text{if $S_j=0$}, \end{cases} \]
where $\delta_0$ denotes the point mass distribution at the origin and $(\mu_j, \tau_j^2, \phi_j) \in \RR \times (0,\infty) \times [0,1]$ are the entries that make up the parameter $\theta$.  If we consider the marginal posterior distribution for $\beta$, the proposed approximation treats each $\beta_j$ as independent and with a mixture of a Gaussian and a point mass distribution, i.e., $\beta_j \sim \phi_j \nm(\mu_j, \tau_j^2) + (1-\phi_j) \delta_0$.  Collecting all such distributions in
\begin{equation}
\label{MF equation}
  \Qset = \Big\{\bigotimes_{j=1}^p \{\phi_j \nm(\mu_j,\tau_j^2)+(1-\phi_j)\delta_0\}:\; \mu_j \in \RR,\;\tau_j^2>0,\;\phi_j \in[0,1]\Big\},
\end{equation}
called the {\em mean-field family} \citep[e.g.,][]{blei2017variational}, the goal then is to find the entry in $\Qset$ that best approximates the posterior $\pi^n$ in a certain sense.  Of course, $\Qset$ is a finite-dimensional set, so this involves optimization only with respect to the parameter $\theta$.  Following \citet{blei2003latent, blei2017variational}, we propose to find the value $\theta$ that maximizes
\begin{equation}
\label{ELBO}
K(\theta) = \E_{(S,\beta) \sim q_\theta} \log\{ \tilde\pi^n(S,\beta) \, / \, q_\theta(S,\beta) \}, 
\end{equation}
the so-called {\em evidence lower bound} \citep[e.g.,][]{blei2017variational}.  Once $\hat\theta = \arg\max_\theta K(\theta)$ is obtained, the variational approximation is 
\[ q^n = q_{\hat\theta}. \]
Solving this optimization problem is not entirely  straightforward, and our proposed computational algorithm will be described in Section~\ref{SS:algorithm} below.




\subsection{Asymptotic theory}
\label{theory}

Here we explore the asymptotic properties of the variational approximation $q^n$ of the empirical Bayes posterior $\pi^n$ in the case of known error variance $\sigma^2$.  To fix ideas and notation, let $\beta^\star$ denote the true $p$-dimensional coefficient vector, where $p$ is possibly much larger than the sample size $n$, in a sense that will be made more precise below.  The $\beta^\star$ vector is sparse in the sense that its configuration $S_{\beta^\star}=\{j: \beta_j^\star \neq 0\}$ is of size relatively small compared to $n$.  In particular, the sample size $n$, the dimension $p$, and the ``effective dimension'' are assumed to satisfy 
\begin{equation}
\label{eq:nps}
|S_{\beta^\star}| = o(n) \quad \text{and} \quad |S_{\beta^\star}| \log(p / |S_{\beta^\star}|) < n, \quad n \to \infty. 
\end{equation}
The latter condition defines the so-called {\em ordinary high-dimensional} setting described in \citet{verzelen2012}.  But note that this allows a wide range of cases, including some where $\log p$ is some power of $n$.   Throughout, the $n \times p$ design matrix $X$ is assumed to be non-random and of rank $n$.  Moreover, we require two additional conditions on $X$.  First, 
\begin{equation}
\label{eq:fullrank}
\text{$X_S^\top X_S$ is non-singular for all $S$ with $|S| \leq n$}
\end{equation}
This is implied by, e.g., the sparse Riesz condition in \citet{zhang2008sparsity}.  It is possible to relax this condition by adjusting the prior distribution to only assign mass to those $S$ such that $X_S^\top X_S$ is non-singular, but this adds considerable complexity.  A typical assumption is that $X$ has rows filled with independent samples from a $p$-variate distribution, e.g., $\nm_p(0,\Psi)$, where $\Psi$ is positive definite, so non-singularity of small sub-matrices is not inconvenient.  Second, to ensure that the special sub-matrix $X_{S^\star}$, with $S^\star = S_{\beta^\star}$, is sufficiently stable, we require that 
\begin{equation}
\label{eq:eigen}
\lambda_{\min}(n^{-1} X_{S^\star}^\top X_{S^\star}) \gtrsim |S^\star| p^{-1}, \quad S^\star = S_{\beta^\star}, 
\end{equation}
where $\lambda_{\min}$ denotes the minimal eigenvalue operator.  Since $|S_{\beta^\star}|$ is small according to \eqref{eq:nps}, and since $|S_{\beta^\star}| p^{-1}$ is vanishing rapidly, this too is not a major restriction.  Finally, we write ``$\prob_{\beta^\star}$'' and ``$\E_{\beta^\star}$'' below to indicate probability and expectation with respect to the distribution of $y$ in \eqref{eq:reg.model} when $\beta^\star$ is the true coefficient vector. 

Below we present three results pertaining to the asymptotic concentration of $q^n$; proofs of all three are given in Appendix~\ref{App:proof}.  Our arguments are based on the beautiful result in \citet[][Theorem~7]{ray2019variational} that connects the concentration properties of the variational approximation to those of the posterior being approximated, and a bound on the Kullback--Leibler divergence between the two.  Various concentration rate results for $\pi^n$ have been established elsewhere (see Appendix~\ref{summarytheory}), so we only need to check this latter condition on the Kullback--Leibler divergence.  It turns out that our derivations are much simpler here, thanks to the conjugate normal prior, compared to the Laplace prior formulation in \citet{ray2019variational}.  

The first result reveals that the data-dependent distribution $q^n$ for the vector $\beta$, concentrates near $\beta^\star$ in the sense that the event ``$\|X(\beta-\beta^\star)\|_2$ is relatively large'' has vanishing $q^n$-probability.  Moreover, the concentration rate---the precise notion of ``relatively large''---is minimax optimal.  Indeed, define  
\begin{equation}
\label{eq:rate}
\eps_n^2(\beta^\star) = |S_{\beta^\star}| \log(p / |S_{\beta^\star}|)
\end{equation}
which, as Theorem~\ref{prediction_error} shows, determines the $q^n$ asymptotic concentration rate.  




\begin{thm}
\label{prediction_error}
Under the setup described above, with \eqref{eq:nps}, \eqref{eq:fullrank}, \eqref{eq:eigen}, $\eps_n^2(\beta^\star)$ as in \eqref{eq:rate}, and any sequence $M_n$ with $M_n \to \infty$, the variational approximation $q^n$ satisfies
\[ \sup_{\beta^\star} \E_{\beta^\star} q^n(\{\beta \in \RR^p: \|X(\beta-\beta^\star)\|_2^2 > M_n \eps_n^2(\beta^\star)\}) \to 0, \quad n \to \infty, \]
where the supremum is over all $\beta^\star$ with $|S_{\beta^\star}|=o(n)$. 
\end{thm}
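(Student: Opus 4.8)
The plan is to combine the transfer principle of \citet[][Theorem~7]{ray2019variational}---which guarantees that a variational approximation concentrates at the same rate as the posterior it targets as soon as the Kullback--Leibler divergence between the two is, with $\prob_{\beta^\star}$-probability tending to one, at most a constant multiple of $\eps_n^2(\beta^\star)$---with the prediction-error concentration of $\pi^n$ itself. The latter has already been established in \citet{martin2017empirical} and is recalled in Appendix~\ref{summarytheory}; it holds in exactly the form the statement requires, uniformly over $\beta^\star$ with $|S_{\beta^\star}|=o(n)$. So the whole content of the argument reduces to the divergence bound
\[ \mathrm{KL}(q^n \,\|\, \pi^n) \lesssim \eps_n^2(\beta^\star) \quad \text{with $\prob_{\beta^\star}$-probability tending to $1$}, \]
uniformly over the relevant $\beta^\star$; here $\eps_n^2(\beta^\star)$ plays the role that the squared prediction-error rate plays in \citet{ray2019variational}, since it measures squared prediction error rather than squared estimation error.

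To bound the divergence I would exploit that $q^n = q_{\hat\theta}$ maximizes the evidence lower bound \eqref{ELBO}. If $m_n$ is the normalizing constant turning $\tilde\pi^n$ of \eqref{joint_post} into $\pi^n$, then $\mathrm{KL}(q_\theta \,\|\, \pi^n) = \log m_n - K(\theta)$ for every $\theta$, so optimality of $\hat\theta$ gives $\mathrm{KL}(q^n \,\|\, \pi^n) \le \mathrm{KL}(q_{\theta} \,\|\, \pi^n)$ for every member $q_{\theta}$ of the mean-field family \eqref{MF equation}; it therefore suffices to exhibit one convenient $q_{\theta^\star}$. I would take the member concentrated on the true configuration---$\phi_j = 1$ for $j\in S^\star$ and $\phi_j=0$ otherwise, with $\mu_{S^\star} = \hat\beta_{S^\star}$ and coordinate variances $\tau_j^2$ to be chosen. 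Because the empirical prior \eqref{prior_emp} is conjugate to the likelihood, completing the square shows that the conditional posterior given $S^\star$ is exactly $\nm(\hat\beta_{S^\star}, \tfrac{\sigma^2}{\alpha+\gamma}(X_{S^\star}^\top X_{S^\star})^{-1})$, so by the chain rule for relative entropy
\[ \mathrm{KL}(q_{\theta^\star} \,\|\, \pi^n) = \log\frac{1}{\pi^n(S^\star)} + \mathrm{KL}\!\Bigl( {\textstyle\bigotimes_{j\in S^\star}} \nm(\mu_j,\tau_j^2) \;\Big\|\; \nm\!\bigl(\hat\beta_{S^\star},\tfrac{\sigma^2}{\alpha+\gamma}(X_{S^\star}^\top X_{S^\star})^{-1}\bigr) \Bigr), \]
and each summand can be attacked in isolation.

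For the first summand, the prior- and posterior-mass estimates behind the results in Appendix~\ref{summarytheory} give $-\log\pi^n(S^\star) = O_{\prob_{\beta^\star}}(\eps_n^2(\beta^\star))$ uniformly in $\beta^\star$ (intuitively, dropping any coordinate of $\beta^\star$ costs the prior a factor of order $p^{1+a}$, and there are at most $|S^\star|$ coordinates to drop). For the second summand the means are matched, leaving a pure Gaussian expression; choosing the precision-matched coordinate variances $\tau_j^2 = \sigma^2\{(\alpha+\gamma)[X_{S^\star}^\top X_{S^\star}]_{jj}\}^{-1}$ makes the trace term equal exactly $|S^\star|$ and collapses the divergence to $\tfrac12\log\frac{\prod_{j\in S^\star}[X_{S^\star}^\top X_{S^\star}]_{jj}}{\det(X_{S^\star}^\top X_{S^\star})}$, a Hadamard ratio that the eigenvalue lower bound \eqref{eq:eigen} and the dimension constraints \eqref{eq:nps} control at level $O(|S^\star|\log(p/|S^\star|)) = O(\eps_n^2(\beta^\star))$. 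Summing the two contributions delivers the required divergence bound, and feeding it together with the $\pi^n$-concentration statement into \citet[][Theorem~7]{ray2019variational} yields the claim for any $M_n \to \infty$; the supremum over $\{\beta^\star: |S_{\beta^\star}|=o(n)\}$ is preserved because both inputs are themselves uniform. The step I expect to be most delicate is the second summand: reducing the Gaussian-to-Gaussian divergence to $O(\eps_n^2(\beta^\star))$ is not automatic but forces the particular precision-matched choice of the $\tau_j^2$ and then genuinely relies on \eqref{eq:eigen} (together with the mild column-norm control implicit in the usual design assumptions) to bound the Hadamard ratio---a careless choice of variational variances inflates either the trace term or the $-\sum_{j\in S^\star}\log\tau_j^2$ term past the target rate. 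Everything else is either quoted from \citet{martin2017empirical} and \citet{ray2019variational} or a routine Gaussian computation.
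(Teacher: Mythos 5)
Your proposal follows essentially the same route as the paper's Appendix C: restrict to the mean-field member with $\phi_j=\mathbf{1}\{j\in S^\star\}$, mean $\hat\beta_{S^\star}$, and precision-matched diagonal variances $\Delta_{S^\star}=\{\mathrm{diag}(V_{S^\star}^{-1})\}^{-1}$; split the divergence into $-\log\pi^n(S^\star)$ plus a Gaussian-to-Gaussian term; bound the first via the prior mass and the denominator $D_n$ and the second via the Hadamard ratio controlled by \eqref{eq:eigen} and the column standardization; then feed the $O(\eps_n^2(\beta^\star))$ bound into Proposition~\ref{prop:Th7}. The only nominal difference is that you state the Kullback--Leibler bound in probability whereas Proposition~\ref{prop:Th7} requires it in $\E_{\beta^\star}$-expectation, but your own computation (with Jensen's inequality applied to $\log D_n$) delivers the expectation version exactly as the paper does, so this is a phrasing issue rather than a gap.
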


The second result concerns the ``effective dimension'' of $q^n$.  Although $q^n$ is a distribution supported on all of $\RR^p$, having point mass mixture components implies that some of those $p$ dimensions are effectively collapsed, thereby reducing the effective dimension.  The following theorem establishes that the effective dimension of $q^n$ is not too much larger than the effective dimension $|S_{\beta^\star}|$ of the true $\beta^\star$.  

\begin{thm}
\label{effective dimension}
Under the setup of Theorem~\ref{prediction_error}, for any sequence $M_n > 1$ with $M_n \to \infty$, the variational approximation $q^n$ satisfies 
\[ \sup_{\beta^\star}\E_{\beta^\star} q^n(\{\beta\in \RR^p:|S_\beta|>M_n |S_{\beta^\star}|\}) \to 0, \quad n \to \infty,\]
where the supremum is over all $\beta^\star$ with $|S_{\beta^\star}|=o(n)$. 
\end{thm}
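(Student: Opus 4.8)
The plan is to mimic the proof of Theorem~\ref{prediction_error}, changing only the target event. The engine is the transfer principle of \citet[][Theorem~7]{ray2019variational}, which lets $q^n$ inherit concentration on an event $A$ from $\pi^n$ as soon as $\pi^n$ concentrates on $A$ (or a comparable event) at an exponential rate that dominates $\mathrm{KL}(q^n \,\|\, \pi^n)$. Two inputs are required. The first, available from the theory for $\pi^n$ summarised in Appendix~\ref{summarytheory}, is an effective-dimension bound whose exponent grows in the excess dimension: there is a constant $K > 0$ such that, with $\prob_{\beta^\star}$-probability tending to one uniformly over $\beta^\star$ with $|S_{\beta^\star}| = o(n)$, $\pi^n(\{\beta : |S_\beta| \geq M|S_{\beta^\star}|\}) \leq \exp\{-K(M-1)\eps_n^2(\beta^\star)\}$ for every $M > 1$. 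The second input is the bound $\mathrm{KL}(q^n \,\|\, \pi^n) \lesssim \eps_n^2(\beta^\star)$, holding with $\prob_{\beta^\star}$-probability tending to one uniformly over the sparsity class. This is exactly the estimate already needed for Theorem~\ref{prediction_error}: since $q^n$ minimises $\mathrm{KL}(\cdot \,\|\, \pi^n)$ over $\Qset$, one has $\mathrm{KL}(q^n \,\|\, \pi^n) \leq \mathrm{KL}(q_0 \,\|\, \pi^n)$ for the convenient member $q_0 \in \Qset$ that sets $\phi_j = 1$ on $j \in S_{\beta^\star}$ and $\phi_j = 0$ elsewhere, with slab mean $\hat\beta_{S^\star}$ and a small slab variance. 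The dominant term of $\mathrm{KL}(q_0 \,\|\, \pi^n)$ is $\log\bigl(1/\pi(S_{\beta^\star})\bigr)$, of order $\eps_n^2(\beta^\star)$, while conditions \eqref{eq:fullrank}--\eqref{eq:eigen} keep the (closed-form Gaussian) slab term and the expected log-likelihood-ratio term of the same or smaller order.

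Given these inputs the assembly is short. Fix $M_n \to \infty$ with $M_n > 1$ and set $A_n = \{\beta : |S_\beta| > M_n|S_{\beta^\star}|\}$. Work on the intersection of the two high-probability events above; on its complement, whose $\prob_{\beta^\star}$-probability tends to zero, bound $q^n(A_n) \leq 1$, which contributes negligibly to the expectation. On the good event, the data-processing inequality for the Kullback--Leibler divergence, applied to the binary partition $\{A_n, A_n^c\}$, gives
\[ q^n(A_n)\log\frac{q^n(A_n)}{\pi^n(A_n)} + q^n(A_n^c)\log\frac{q^n(A_n^c)}{\pi^n(A_n^c)} \leq \mathrm{KL}(q^n \,\|\, \pi^n). \]
Bounding $q^n(A_n^c)\log\bigl(q^n(A_n^c)/\pi^n(A_n^c)\bigr)$ and $q^n(A_n)\log q^n(A_n)$ below by $-e^{-1}$ and inserting $\pi^n(A_n) \leq \exp\{-K(M_n - 1)\eps_n^2(\beta^\star)\}$ yields
\[ q^n(A_n) \leq \frac{\mathrm{KL}(q^n \,\|\, \pi^n) + 2e^{-1}}{K(M_n - 1)\,\eps_n^2(\beta^\star)} \lesssim \frac{1}{M_n - 1} \to 0, \]
where the middle step uses $\mathrm{KL}(q^n \,\|\, \pi^n) \lesssim \eps_n^2(\beta^\star)$ (and the degenerate case $|S_{\beta^\star}| = 0$ is argued directly, the posterior then concentrating on the empty configuration). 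Taking $\E_{\beta^\star}$, adding back the vanishing contribution from the bad event, and then the supremum over $\beta^\star$ with $|S_{\beta^\star}| = o(n)$---all bounds being uniform---completes the proof.

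Thus Theorem~\ref{effective dimension} is essentially a corollary of the machinery behind Theorem~\ref{prediction_error}: the only genuinely new requirement is that the Appendix~\ref{summarytheory} dimension bound for $\pi^n$ be of the exponential-in-excess-dimension form used above, after which swapping the prediction-error event for the dimension event is mechanical. Accordingly, the main technical obstacle---already handled in the proof of Theorem~\ref{prediction_error}---is the shared estimate $\mathrm{KL}(q_0 \,\|\, \pi^n) = O(\eps_n^2(\beta^\star))$ uniformly in $\beta^\star$, and in particular the verification that the Gaussian slab term $\mathrm{KL}\bigl(q_0(\beta_{S^\star}\mid S^\star) \,\|\, \pi_n(\beta_{S^\star}\mid S^\star)\bigr)$ and the expected log-likelihood-ratio term are $O(\eps_n^2(\beta^\star))$; this is where the eigenvalue lower bound \eqref{eq:eigen} on $n^{-1}X_{S^\star}^\top X_{S^\star}$ is used, and where the divergence being merely $O(\eps_n^2(\beta^\star))$ rather than $o(\eps_n^2(\beta^\star))$ does no harm precisely because $M_n \to \infty$ supplies the missing factor.
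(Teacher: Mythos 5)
Your proposal is correct and follows essentially the same route as the paper: the same KL bound $K(q^n,\pi^n)\lesssim \eps_n^2(\beta^\star)$ obtained from the test element of $\Qset$ with $\phi_j=\mathbf{1}\{j\in S_{\beta^\star}\}$, combined with the exponential posterior bound on the dimension event (Proposition~\ref{prop:effective_dimension}) and the Ray--Szab\'o transfer principle (Proposition~\ref{prop:Th7}), which you simply re-derive inline via the data-processing inequality on the partition $\{A_n,A_n^c\}$ rather than citing it as a black box. The only substantive additions are cosmetic or minor: working on a high-probability event instead of in expectation, and flagging the degenerate case $|S_{\beta^\star}|=0$, which the paper leaves implicit.
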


The third and final result of this section concerns the direct concentration of $q^n$ around $\beta^\star$, i.e., in terms of the distance between $\beta$ and $\beta^\star$ rather than distance between the corresponding mean responses.  For this, some extra conditions on the $X$ matrix are required, roughly, to ensure that certain sub-matrices---based on subsets of the columns of $X$---are full rank.  In particular, define the smallest scaled sparse singular value of $X$ of dimension $s$ as
\begin{equation}
\label{eq:kappa}
\kappa_X(s)=\inf_{\beta:0<|S_\beta|\leq s}\frac{\|X\beta\|_2}{\|\beta\|_2},\quad s=1,\dots,p.
\end{equation}
\citet{ariascastro.lounici.2014} show that a sparse, high-dimensional $\beta^\star$ is identifiable from a model with design matrix $X$ if and only if $\kappa_X(2|S_{\beta^\star}|) > 0$.  Slightly more than identifiability is needed here---and in all other papers on this topic---to establish concentration rates with respect to $\|\beta-\beta^\star\|_2$ and, we assume that $\kappa_X(C|S_{\beta^\star}|) > 0$ for a suitable constant $C > 2$.  This is implied by \eqref{eq:fullrank}, the difference here is that the concentration rate with respect to $\|\beta-\beta^\star\|$ is determined by how fast $\kappa_X(C|S_{\beta^\star}|)$ approaches 0.  For more on identifiability, see, e.g.,  \citet{ariascastro.lounici.2014} and \citet{castillo2015bayesian}.

\begin{thm}
\label{estimation_error}
Under the setup in Theorem~\ref{prediction_error}, in particular, with \eqref{eq:nps}, \eqref{eq:fullrank}, and \eqref{eq:eigen}, for any sequence $M_n$ such that $M_n \to \infty$, the variational approximation $q^n$ satisfies
\[ \E_{\beta^\star} q^n\Bigl( \Bigl\{\beta\in \RR^p: \|\beta-\beta^\star\|^2_2 > \frac{M_n \eps_n^2(\beta^\star)}{\kappa_X^2(C|S_{\beta^\star}|)} \Bigr\}\Bigr) \to 0, \quad n \to \infty,\]
for all $\beta^\star$ such that $|S_{\beta^\star}|=o(n)$ for $\kappa_X$ in \eqref{eq:kappa} and $C > 2$.  
\end{thm}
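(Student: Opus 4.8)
The plan is to obtain the $\ell_2$ statement as a corollary of the prediction-error bound of Theorem~\ref{prediction_error} and the effective-dimension bound of Theorem~\ref{effective dimension}, combined with a purely linear-algebraic comparison via the sparse singular value $\kappa_X$ in \eqref{eq:kappa}. Write $u = \beta - \beta^\star$. The elementary observation driving everything is: on the event $\{|S_\beta| \le C_1 |S_{\beta^\star}|\}$ for a fixed constant $C_1$, the difference $u$ is supported on a set of size $|S_u| \le |S_\beta| + |S_{\beta^\star}| \le (C_1+1)|S_{\beta^\star}|$, so if $C$ is taken large enough relative to $C_1$ (in particular $C > C_1 + 1$, which is consistent with the hypothesis $C > 2$), then the definition \eqref{eq:kappa} together with the monotonicity of $s \mapsto \kappa_X(s)$ yields $\|u\|_2 \le \kappa_X(|S_u|)^{-1}\|Xu\|_2 \le \kappa_X(C|S_{\beta^\star}|)^{-1}\|Xu\|_2$. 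Intersecting with $\{\|Xu\|_2^2 \le M_n \eps_n^2(\beta^\star)\}$ then gives $\|\beta-\beta^\star\|_2^2 \le M_n \eps_n^2(\beta^\star)/\kappa_X^2(C|S_{\beta^\star}|)$, which is exactly the complement of the event in the theorem.

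First I would fix the constant $C_1$. What is needed is the sharper, fixed-constant form of effective-dimension control, namely $\sup_{\beta^\star}\E_{\beta^\star} q^n(\{\beta : |S_\beta| > C_1 |S_{\beta^\star}|\}) \to 0$ for some prior-dependent $C_1$. This is precisely what is established in the proof of Theorem~\ref{effective dimension}: the complexity prior \eqref{eq:prior.size} forces $\pi^n$ to charge only models of size $O(|S_{\beta^\star}|)$ (this is part of the $\pi^n$ theory summarized in Appendix~\ref{summarytheory}), and transferring the resulting exponential bound to $q^n$ through the Kullback--Leibler estimate and \citet[][Theorem~7]{ray2019variational} gives the fixed-constant statement; the diverging-$M_n$ version quoted in Theorem~\ref{effective dimension} is an immediate weakening of it.

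Then I would assemble the pieces with a union bound. Writing $A_n = \{\beta : |S_\beta| > C_1 |S_{\beta^\star}|\}$ and $B_n = \{\beta : \|X(\beta-\beta^\star)\|_2^2 > M_n\eps_n^2(\beta^\star)\}$, the first paragraph shows that the target event is contained in $A_n \cup B_n$, hence
\[ \E_{\beta^\star} q^n\bigl(\{\beta : \|\beta-\beta^\star\|_2^2 > M_n\eps_n^2(\beta^\star)/\kappa_X^2(C|S_{\beta^\star}|)\}\bigr) \le \E_{\beta^\star} q^n(A_n) + \E_{\beta^\star} q^n(B_n), \]
where the first term on the right vanishes by the fixed-constant effective-dimension bound of the previous paragraph and the second vanishes by Theorem~\ref{prediction_error} applied with the same diverging sequence $M_n$. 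Taking the supremum over $\beta^\star$ with $|S_{\beta^\star}| = o(n)$ finishes the proof.

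The step I expect to be the crux is reconciling the diverging multiplier permitted in the effective-dimension control with the fixed constant $C$ forced to appear inside $\kappa_X(C|S_{\beta^\star}|)$: since $\kappa_X$ is decreasing, a support size of order $(M_n+1)|S_{\beta^\star}|$ cannot be absorbed into $\kappa_X(C|S_{\beta^\star}|)^{-1}$, so the $M_n\to\infty$ form of Theorem~\ref{effective dimension} is genuinely insufficient and one must appeal to its fixed-constant refinement. Once that is in hand, the linear-algebraic comparison and the union bound are routine.
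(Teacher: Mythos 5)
Your overall plan---deduce the $\ell_2$ bound from the prediction-error and effective-dimension bounds via the inequality $\|\beta-\beta^\star\|_2 \leq \kappa_X(|S_{\beta-\beta^\star}|)^{-1}\|X(\beta-\beta^\star)\|_2$---is the standard route for the \emph{exact} posterior $\pi^n$, and you have correctly isolated the crux: you need effective-dimension control with a \emph{fixed} constant $C_1$, not a diverging multiplier. The gap is that this fixed-constant statement for $q^n$ is not available from the paper's machinery, and your claim that the $M_n\to\infty$ version in Theorem~\ref{effective dimension} is ``an immediate weakening of it'' has the logic backwards. The transfer device (Proposition~\ref{prop:Th7}) gives $\E_{\beta^\star} q^n(B_n) \lesssim \delta_n^{-1}\{\E_{\beta^\star}K(q^n,\pi^n) + e^{-\delta_n/2}\}$, and the Kullback--Leibler term is itself of order $\eps_n^2(\beta^\star) = |S_{\beta^\star}|\log(p/|S_{\beta^\star}|)$. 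For the event $\{|S_\beta| > C_1|S_{\beta^\star}|\}$ with fixed $C_1$, the exponential rate available from the $\pi^n$ theory is $\delta_n \asymp |S_{\beta^\star}|\log n$ (see Proposition~\ref{prop:effective_dimension}), or at best of order $\eps_n^2(\beta^\star)$; either way $\delta_n^{-1}\E_{\beta^\star}K(q^n,\pi^n)$ is merely bounded and does not vanish. This is precisely why Theorem~\ref{effective dimension} carries a diverging $M_n$: the divergence is what beats the KL term, and it cannot be dispensed with. Since $\kappa_X$ is decreasing, a support of size $(M_n+1)|S_{\beta^\star}|$ cannot be absorbed into $\kappa_X(C|S_{\beta^\star}|)$ for fixed $C$, so---as you yourself anticipated---your union bound does not close.

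The fix, which is the paper's actual proof, is to perform your decomposition at the level of $\pi^n$ rather than $q^n$. For the exact posterior the fixed-constant dimension bound \emph{is} available, and the resulting exponential inequality for the $\ell_2$ event---with the factor $\kappa_X(C|S_{\beta^\star}|)$ already built in and exponent $M_n'\eps_n^2(\beta^\star)$, $M_n'\to\infty$---is quoted as Proposition~\ref{prop:post_est}. One then applies Proposition~\ref{prop:Th7} once, directly to that event, obtaining $\E_{\beta^\star}q^n(\cdot) \lesssim (M_n'\eps_n^2)^{-1}\{\eps_n^2 + e^{-M_n'\eps_n^2/2}\}\to 0$ because $M_n'\to\infty$. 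In short: transfer the final exponential inequality, rather than transferring the two ingredients separately and combining them afterwards.
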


\subsection{Algorithm}
\label{SS:algorithm}

In existing work on variational approximations in linear regression settings, typically the prior distribution treats the entries of $\beta$ as independent of even independent and identically distributed (iid).  For example, in \citet{ray2019variational}, the entries of $\beta$ are {\em a priori} iid, with the marginal prior for $\beta_j$ a mixture of a point mass at 0 and a Laplace distribution centered at 0, $j=1,\ldots,p$.  It is not possible, however, to recast our empirical prior as iid, so a different approach is needed.  

The key observation is that, for any fixed error variance $\sigma^2$, thanks to the conjugate normal form of the empirical prior/posterior, it is possible to derive explicit expressions for coordinate ascent updates to the parameter $\theta=\{(\mu_j, \tau_j^2, \phi_j):j=1,\ldots,p\}$ in the variational family.  After standardizing $X$ and $y$, so that
\begin{equation}
\label{eq:standardize}
\textstyle \sum_{i=1}^n y_i=0,\quad \sum_{i=1}^n x_{ij}=0,\quad \text{and} \quad \sum_{i=1}^n x_{ij}^2=n, 
\end{equation}
we maximize the evidence lower bound in \eqref{ELBO} via coordinate ascent, for a fixed $\sigma^2$, via the updates
\begin{align}
\mu_j^{(t+1)} & =\frac{(X^\top y)_j-\sum_{k< j}(X^\top X)_{jk}\phi_k^{(t+1)} \mu_k^{(t+1)}-\sum_{k> j}(X^\top X)_{jk}\phi_k^{(t)} \mu_k^{(t)}+\frac{\gamma g(\tilde S)}{\alpha}\tilde\beta_j}{n+\gamma g(\tilde S)/\alpha} \notag \\ 
\tau_j^{2(t+1)} & = \frac{\sigma^2}{n(\alpha+\gamma)} \label{eqn:update_equ} \\
\logit\,\phi_j^{(t+1)} & = \frac12 \log\frac{\gamma g(\tilde S) }{n(\alpha+\gamma)} 
+ \Bigl\{ \frac{n\alpha}{2} + \gamma g(\tilde S) \Bigr\} \frac{\mu_j^{(t+1)2}}{\sigma^2} -\frac{\gamma g(\tilde S)}{2\sigma^2}(\mu_j^{(t+1)} - \tilde\beta_j)^2 \notag \\
& \qquad -\log c-a\log p, \notag 
\end{align}
where $g(S)$ denotes the geometric mean of the eigenvalues of $X_S^\top X_S$, $\tilde\beta$ is the lasso estimator, and $\tilde{S} = \{j: \tilde\beta_j \neq 0\}$ is the configuration selected by lasso.  The algorithm is stopped if for each $j$, the change in entropy between $\ber(\phi_j^{(t+1)})$ and $\ber(\phi_j^{(t)})$ is less than a prespecified threshold $\delta$, that is, we stop the iterations when 
\[ \max_j \bigl|H(\phi^{(t+1)}_j)-H(\phi^{(t)}_j)\bigr|<\delta, \]
where $H(\phi_i)= -p\log_2 p-(1-p)\log_2 (1-p)$.  Detailed derivations of the update equations in \eqref{eqn:update_equ} are presented in Appendix~\ref{derivation}.  Note that certain steps of these derivations make some simplifying assumptions about the $X_S^\top X_S$ matrix.  In particular, after standardizing the columns of $X$ as in \eqref{eq:standardize} and the making the full-rank assumption in \eqref{eq:fullrank}, it is not unreasonable to expect a certain  ``homogeneity'' in $X_S^\top X_S$ as $S$ varies.  That is, the spectrum of $X_S^\top X_S$ should be relatively narrow and relatively insensitive to changes in $S$.  This boils down to effectively ignoring the off-diagonal terms in $X_S^\top X_S$, which is what the variational approximation proposes to do anyway.  


Having explicit update equations is an advantage, but these are not immediately applicable because, of course, the error variance $\sigma^2$ is unknown in practice.  One obvious work-around is to replace $\sigma^2$ with a plug-in estimator $\hat\sigma^2$, e.g., \citet{ray2019variational} use the lasso-based estimator in \citet{reid.tibshirani.friedman.2014}, implemented in the {\tt selectiveInference} package in R. Alternatively, \citet{huang2016variational} update $\sigma^2$ with the maximum a posteriori  estimate at each iteration of coordinate ascent.  We found that these two strategies, combined with our update equations presented above, led to rather unstable performance in simulations.  Therefore, we opt for a modified version of the importance sampling-based procedure used in \citet{carbonetto2012scalable}.  In particular, specify a range of $\sigma^2$ values, denoted by $\Sigma = \{\varsigma_1^2,\ldots,\varsigma_L^2\}$, and, for each $\ell=1,\ldots,L$, apply the aforementioned coordinate ascent procedure to get parameter estimates 
\[ \theta(\ell) = \bigl\{ (\mu_j(\ell), \tau_j^2(\ell), \phi_j(\ell)): j=1,\ldots,p\}, \quad \ell=1,\ldots,L, \]
each based on treating $\sigma^2 = \varsigma_\ell^2$ as fixed.  In addition, define 
\[ \hat S(\ell) = \{j: \phi_j(\ell) > \tfrac12\} \]
as the selected configuration based on the $\ell^\text{th}$ fixed variance, and evaluate the weights  
\begin{equation}
\label{eq:S.weight}
\tilde w_\ell = \tilde\pi^n(\hat S(\ell)), \quad \ell=1,\ldots,L, 
\end{equation}
where $\tilde\pi^n$ is the unnormalized marginal posterior for the configuration $S$ in \eqref{marginal} based on the formulation with an inverse gamma prior for $\sigma^2$.  Finally, we summarize the $L$ different variational family parameter estimates as 
\begin{equation}
\label{eq:updates.avg}
\textstyle \mu_j = \sum_{\ell=1}^L w_\ell \mu_j(\ell), \quad \tau_j^2 = \sum_{\ell=1}^L w_\ell \tau_j^2(\ell), \quad \text{and} \quad \phi_j = \sum_{\ell=1}^L w_\ell \phi_j(\ell),
\end{equation}
where $w_\ell = \tilde w_\ell / \sum_{\ell=1}^L \tilde w_\ell$ are the normalized weights.  Our approach is similar to that in the {\tt varbvs} procedure \citep{carbonetto2012scalable} in the sense that both use a weighted average of various fixed-$\sigma^2$ parameter estimates.  However, our update mechanism is different from theirs in two important aspects.  First, {\tt varbvs} is approximating  integration over a three-dimensional hyperparameter space with importance sampling, which requires hundreds of samples, while our approach resembles a grid search on $\Sigma$ which requires less than 10 samples.  Second, {\tt varbvs} calculates importance weights based on the evidence lower bound while we use the marginal posterior probability \eqref{marginal} evaluated at a selected configuration instead, which we found to have superior empirical performance compared to other techniques. A summary of our proposed procedure, {\em VB-empirical}, is presented in Algorithm~\ref{algo}.

It is worth noting that \citet{huang2016variational} propose a batch-wise coordinate ascent algorithm where they update the entire $\mu$, $\tau^2$, or $\phi$ vector all at once instead of one entry at a time.  Although a version of this batch-wise algorithm could be easily derived in our context, we found the results to be relatively unstable compared to a standard one-at-a-time update. Also,  \citet{ray2019variational} notice that the standard one-at-a-time updates are sensitive to the ordering of the parameters and, therefore, they propose a prioritized updating scheme. In particular, variables are updated in decreasing order according to an initial estimate $\mu^{(0)}$, so that important variables are expected to be updated first. We also employ this simple yet efficient strategy in our algorithm to avoid the sensitivity to updating order.   


\begin{algorithm}[t]
\SetAlgoLined
\KwIn{standardized data $(X,y)$; a fixed estimator $\tilde\beta$ based on, say, lasso; a grid $\Sigma = \{\varsigma_1^2,\ldots,\varsigma_L^2\}$ of error variances; and a stopping threshold $\delta$.}

Initialize $\theta=(\mu,\tau^2,\phi)$ and set $d=\text{order}(|\mu|)$.

\For{$\ell$ in $1,\ldots,L$}{

$\sigma^2= \varsigma_\ell^2$

\Repeat{$\max_j |H(\phi_j')-H(\phi_j)|<\delta$}{
$\phi'=\phi$;

\For{$k$ in $1,\ldots,p$}{
    
    $j=d_k$;
    
    update $\mu_j$, $\tau_j^2$, and $\phi_j$ according to \eqref{eqn:update_equ};
    

}
}
\textbf{return}
$\mu(\ell)$, $\tau^2(\ell)$, and $\phi(\ell)$; and $\tilde w_\ell = \pi^n(\hat S(\ell))$ as in \eqref{eq:S.weight} 

}

\textbf{output} weighted averages $\mu$, $\tau^2$, and $\phi$ as in \eqref{eq:updates.avg}.

\caption{VB-empirical --- variational approximation for empirical Bayes}
\label{algo}
\end{algorithm}

\section{Numerical comparisons}
\label{Simulation}

\subsection{Methods}

In this section, we compare three variational methods: varbvs from \citet{carbonetto2012scalable}, VB-Gaussian from \citet{huang2016variational} and VB-Laplace from \citet{ray2019variational}
with our VB-empirical in different scenarios. Also, we include the results from Lasso as a benchmark.  For a fair comparison, we let all variational methods start from the lasso estimator and set the same stopping criteria, i.e., the convergence is determinated by the maximum entropy difference as defined in Algorithm~\ref{algo} and define $\delta=10^{-4}$.  For varbvs, we use the R package ``varbvs" and set all parameters as default.  For VB-Gaussian, we use the component-wise VB in \citet{huang2016variational} instead of the batch-wise version since we did not find significant improvement using the latter in our simulations. We further take $v_1=100$, $v=1$, $\lambda=1$, $a_0=1$ and $b_0=p$ in VB-Gaussian. For VB-Laplace, we set $\lambda=1$, $a_0=1$, $b_0=p$ and estimate the regression error term $\sigma^2$ using R package ``selectiveInference".  For VB-empirical, we let $c=1$, $a=0.05$, $\alpha=0.99$, $\gamma=0.005$, and set the initial $\beta$ estimate based on lasso.  For the candidate variance set $\Sigma$, we first find an estimation $\hat\sigma^2$ through ``selectiveInference". Centered at $\hat{\sigma}^2$, we define a interval $[\hat{\sigma}^2/5,9\hat{\sigma}^2/5]$ and choose $L=10$ equally spaced values in this interval as $\varsigma_1^2,\dots,\varsigma_L^2$. It is also worth noted that while \citet{ray2019variational} observe that the VB-Laplace's performance is sensitive to the update ordering, it is a common problem for all variational methods. Fortunately, this sensitivity could be resolved by the prioritized updating scheme proposed in \citet{ray2019variational}, and this technique could be easily accommodated by other variational methods. 

For each scenario, 100 data sets are randomly generated. The design matrix is generated from a multivariate normal distribution, with mean 0 and unit marginal variances.  We compare these methods based on four metrics: averaged $\ell_2$ estimation error, averaged model size, proportion of correct model identifications, $\prob(\hat S = S^\star)$, and the proportion of correct model inclusion, $\prob(\hat S \supseteq S^\star)$. For each variational method, we return $\hat{\beta}_j=\phi_j \mu_j$ for estimation and $\hat S = \{j: \phi_j > \frac12\}$ for model selection.  

\subsection{Simulation I: effect of dimension}
In this section, we compare five methods' performance under different combinations of $n$, $p$, $s$, where $s$ is the number of important ones. we consider five different cases as follows.
\begin{enumerate}
\item $n=100,\;p=400,\;s=10,\;\beta_{S^\star}^\star=(0.5,1.0,1.5,\dots,4.5,5.0,0,\dots,0)^\top.$

\item $n=200,\;p=400,\;s=10,\;\beta_{S^\star}^\star=(0.5,1.0,1.5,\dots,4.5,5.0,0,\dots,0)^\top.$

\item 
$n=100,\;p=400,\;s=20$,
\[ \beta_{S^\star}^\star=(\text{rep}(0.5,5),\text{rep}(1,5),\text{rep}(1.5,5),\text{rep}(2.0,5),0,\dots,0)^\top \]

\item 
$n=200,\;p=800,\;s=20,\;\beta_{S^\star}^\star=(0.5,1.0,1.5,\dots,9.5,10.0,\dots,0)^\top.$

\item
$n=200,\;p=1600,\;s=40,\;\beta_{S^*}=(\beta^*,0,\dots,0)^\top,$where $\beta^\star$ is a sequence with 40 equally spaced values from 1 to 10.
\end{enumerate}
For all five scenarios, the design matrix $X$ is generated from a multivariate normal distribution with mean zero and identity covariance matrix. 

The simulation results are shown in Table~\ref{table:sim_dimension}. VB-empirical and varbvs perform significantly better than the other three across five different dimension settings and Lasso tends to choose large models in all settings.  For case~1 and case~2, all variational methods have similarly good performance. Case 3 is more challenging due to the first five small signals. In this case, VB-empirical performs the best in both estimation error and model selection performance. The performance of varbvs is similar to that of VB while VB-Gauss and VB-Laplace tend to ignore some small signals in this case.  For case 4 and 5 where dimensions are higher, varbvs and VB-empirical significantly outperform the other three. Comparing varbvs and VB-empirical, the latter's performance is more stable than the former's since the standard error of $\ell_2$ estimation for VB-Emp is only 0.06 in Case~5 and that for varbvs is 3.76 even though the difference between the averaged $\ell_2$ estimations is small.  It is also worth noting that, in Case~5, even though lasso choose vary large model, it does not include all important ones and VB-empirical could still identify the true model with incorrect lasso estimation.

\begin{table}[t]
\begin{center}
\begin{tabular}{c c c c c c}
 \hline
 Case & Method&$\E\|\hat\beta-\beta^\star\|^2$ (SE)  & $\E|\hat S|$&$\prob(\hat{S}\supseteq S^\star)$&$\prob(\hat{S}=S^\star)$\\
 \hline
  1&Lasso&1.10(0.18)&19.44&0.85&0 \\
   &VB-Gauss&0.40(0.13)&9.99&0.75&0.60\\
   &VB-Laplace&0.46(0.17)&10.38&0.69&0.49\\
    &varbvs&0.41(0.14)&9.80&0.75&0.72\\  
    &VB-Emp&0.43(0.11)&9.84&0.73&0.66\\
\hline
 2&Lasso&0.68(0.10)&16.50&1&0.03 \\
   &VB-Gauss&0.23(0.07)&10.01&0.98&0.95\\
   &VB-Laplace&0.24(0.08)&10.10&0.98&0.91\\
    &varbvs&0.23(0.07)&10.03&1&0.97\\  
    &VB-Emp&0.27(0.06)&10.04&0.99&0.95\\
 \hline
 3&Lasso&2.18(0.61)&35.53&0.15&0 \\
   &VB-Gauss&1.20(0.96)&16.80&0.16&0.14\\
   &VB-Laplace&1.23(0.49)&16.91&0.05&0.01\\
    &varbvs&0.89(0.42)&18.56&0.27&0.18\\  
    &VB-Emp&0.82(0.23)&19.01&0.33&0.19\\
 \hline
 4&Lasso&1.20(0.16)&32.69&0.98&0 \\
   &VB-Gauss&0.47(0.14)&28.2&0.94&0.04\\
   &VB-Laplace&0.56(0.16)&34.21&0.91&0.04\\
    &varbvs&0.35(0.08)&19.98&0.97&0.96\\  
    &VB-Emp&0.39(0.09)&19.92&0.95&0.95\\
 \hline
 5&Lasso&10.90(4.60)&76.89&0.13&0 \\
   &VB-Gauss&3.59(3.38)&39.16&0.25&0.06\\
   &VB-Laplace&3.14(2.45)&43.10&0.13&0\\
    &varbvs&0.88(3.76)&39.67&0.99&0.96\\  
    &VB-Emp&0.52(0.06)&40&1&1\\
 \hline
\end{tabular}
\end{center}
\caption{Results for Simulation I.}
\label{table:sim_dimension}
\end{table}

\subsection{Simulation II: effect of signal size}
In this section, we consider different signal sizes under fixed dimensions. In particular, we fix $n=200,\;p=1600,\;s=40$ and run simulations in large, moderate and small signals cases as follows.
\begin{enumerate}
    \item $\beta_{S^\star}^\star=(10,\dots,10,0,\dots,0)^\top.$
    \item $\beta_{S^\star}^\star=(1,\dots,1,0,\dots,0)^\top.$
    \item $\beta_{S^\star}^\star=(0.6,\dots,0.6,0,\dots,0)^\top.$
\end{enumerate}
In all three considered settings, VB-empirical performs significantly better than all other methods. While variable selection problem in this dimension setting is difficult, VB-empirical has strong performance in terms of both estimation error and identifying true models.  In contrast, varbvs selects very small models in all cases, especially in large and small signal cases, which might be caused by improper choice of samples for $\sigma^2$.  Similarly, in these cases where lasso chooses many incorrect predictors, the variance estimation based on lasso is unreliable and affects the performance of VB-Laplace as well.

\begin{table}[t]
\begin{center}
\begin{tabular}{c c c c c c}
 \hline
 $\beta_i^*$ & Method&$\E\|\hat\beta-\beta^\star\|^2$ (SE)  & $\E|\hat S|$&$\prob(\hat{S}\supseteq S^\star)$&$\prob(\hat{S}=S^\star)$\\
 \hline
  10&Lasso&38.65(14.58)&68.15&0.29&0 \\
   &VB-Gauss&26.36(28.51)&27.75&0.51&0.09\\
   &VB-Laplace&27.85(24.65)&55.29&0.48&0.03\\
    &varbvs&58.56(10.63)&5.90&0.03&0.03\\  
    &VB-Emp&0.51(0.06)&40&1&1\\
\hline
 1&Lasso&4.72(0.80)&61.35&0.07&0  \\
   &VB-Gauss&4.84(2.07)&13.83&0.14&0.11\\
   &VB-Laplace&4.54(2.06)&26.43&0.16&0.10\\
    &varbvs&5.14(2.03)&11.71&0.16&0.08\\  
    &VB-Emp&0.53(0.08)&40.36&1&0.82\\
 \hline
 0.6&Lasso&3.14(0.32)&50.60&0&0  \\
   &VB-Gauss&3.61(0.27)&4.90&0&0\\
   &VB-Laplace&3.49(0.44)&18.65&0.01&0\\
    &varbvs&3.68(0.31)&3.38&0.01&0\\  
    &VB-Emp&1.84(1.43)&26.72&0.58&0.20\\
 \hline
\end{tabular}
\end{center}
\label{table:sim}
\caption{Results for Simulation II. $\beta_i^*$ represents the value of important signal.}
\end{table}

\subsection{Simulation III: effect of correlation}
In this section, we fix
\[n=100,p=400,s=10,\]
\[\beta=(0.6,0.9,1.2,1.5,1.8,2.1,2.4,2.7,3.0,3.3,0,\dots,0)^\top.\]
The design matrix is generated from a multivariant normal distribution with mean 0 and covariance matrix having element $(i,j)$ being $\rho^{|i-j|}$. We vary $\rho$ among 0.2, 0.5 and 0.8 to explore how different predictor correlations would affect simulation results. The simulation results are recorded in Table~\ref{table:correlation}. 

The general conclusion is the same as that in all other cases, that is, VB-Emp perfroms most stably in all cases. VB-Gauss and VB-Laplace tend to choose larger models with the increase of collinearity.  However, VB-Laplace does not always identify all important ones even though it choose many unimportant variables, especially in high correlation cases. In contrast, varbvs and VB-Emp prefer small models in high dimensional cases and might thus ignore some important variables. But we argue that according to the $l_2$ estimation error and the averaged model size in Case 3, it's most likely that VB-Emp only ignore one predictor with small signal, which is still acceptable in practice.

\begin{table}[t]
\begin{center}
\begin{tabular}{c c c c c c}
 \hline
 $\rho$ & Method& $\E\|\hat\beta-\beta^\star\|^2$ (SE)  & $\E|\hat S|$&$\prob(\hat{S}\supseteq S^\star)$&$\prob(\hat{S}=S^\star)$\\
 \hline
  $0.2$&Lasso&0.82 (0.13)&16.08&0.99&0.05\\
   &VB-Gauss&0.38 (0.12)&10.15&0.93&0.80\\
   &VB-Laplace&0.47 (0.19)&11.58&0.91&0.48\\
    &varbvs&0.37 (0.12)&10.03&0.94&0.86\\  
    &VB-Emp&0.41 (0.11)&10.07&0.93&0.81\\
\hline
 $0.5$&Lasso&0.65 (0.12)&12.62&0.99&0.26 \\
   &VB-Gauss&0.51 (0.17)&12.58&0.94&0.29\\
   &VB-Laplace&1.01 (0.34)&25.99&0.51&0\\
    &varbvs&0.51 (0.21)&9.82&0.76&0.72\\  
    &VB-Emp&0.53 (0.17)&9.92&0.78&0.72\\
 \hline
 $0.8$&Lasso&0.79 (0.21)&10.70&0.95&0.56  \\
   &VB-Gauss&0.97 (0.48)&23.99&0.97&0.01\\
   &VB-Laplace&3.10 (1.21)&41.94&0.04&0\\
    &varbvs&2.19 (0.96)&8.25&0.0&0\\  
    &VB-Emp&1.06 (0.35)&9.13&0.20&0.20\\
 \hline
\end{tabular}
\end{center}
\caption{Results for Simulation III.}
\label{table:correlation}
\end{table}

\section{Special case: orthogonal design}
\label{orthogonal}

\subsection{Simpler model and approximation}

Consider the case where $p \leq n$ and the design matrix $X$ is orthogonal, i.e., $X^\top X = I_p$.  In such cases, via a simple linear transformation, the original regression problem in \eqref{eq:reg.model} can be recast as a sparse, high-dimensional normal means model.  Indeed, if we set $y \gets X^\top y$ and $n \gets p$, then we have 
\begin{equation}
\label{eq:means.model}
y_i \sim \nm(\beta_i, \sigma^2), \quad i=1,\ldots,n, \quad \text{independent}.
\end{equation}
We continue to assume that the $n$-vector $\beta$ is sparse in the sense that most of its entries are zero.  The goal is to make inference on the sparse $\beta$ vector and, in particular, to identify which entries are non-zero.  Although this is a very special case of the original regression problem, it is interesting in its own right.  Indeed, this model is common in all sorts of signal detection problems from image denoising \citep[e.g.,][]{abramovich2006, donohojohnstone1994a, johnstonesilverman2005} to genomics \citep[e.g.,][]{efron2004, jincai2007, mt-test}.

For this version of the problem, the empirical prior construction can proceed almost the same as before. As a first step, since we are assuming the same kind of sparsity as before, the reparametrization $\beta \equiv (S,\beta_S)$ is appropriate here too and, therefore, so is the hierarchical empirical prior formulation. Following \citet{martin2019empirical}, who build upon the original work in \citet{martin2014asymptotically}, we set the marginal prior for $S$ as 
\[ \pi(S) = \textstyle \binom{n}{|S|}^{-1} f_n(|S|), \]
where $f_n$ is as in \eqref{eq:prior.size}, but with $p \equiv R \equiv n$.  Then the conditional prior for $\beta_S$, given $S$, can be written simply as 
\[ \beta_S \mid S, \sigma^2 \sim \pi_n(\beta_S \mid S) := \nm_{|S|}(y_S, \gamma^{-1} \sigma^2 I_{|S|}). \]
See, also, \citet{belitser.ddm} and \citet{belitser.nurushev.uq}. Combining the joint empirical prior $\pi_n(S,\beta_S) = \pi(S) \pi_n(\beta_S \mid S)$ with the ($\alpha$ power of the) likelihood, yields a posterior $\pi^n(S, \beta_S)$ exactly as before.  Algorithms for posterior sampling along with asymptotic posterior concentration rate results are presented in the aforementioned papers.  Here the goal is to develop an appropriate variational approximation to the posterior distribution $\pi^n$ and investigate its properties.  


For this simpler model, it turns out that we {\em can} rewrite the prior in a simple independent spike-and-slab style.  Indeed, the marginal prior for $\beta_i$ is of the form 
\[ \beta_i \sim \lambda \, \nm(y_i, \sigma^2 \gamma^{-1}) + (1-\lambda) \, \delta_0, \quad i=1,\ldots,n, \]
where $\lambda \equiv \pi(S \ni i)$ is the prior inclusion probability, which does not depend on $i$---provided that the prior for $S$ is uniform on configurations of a given size.  Although $\lambda$ does not depend on an individual $i$, it does depend on the sample size, $n$, so we will henceforth write $\lambda_n$.  In fact, it is relatively easy to show that $\lambda_n = n^{-1} \E|S|$, where the latter is the prior mean for $|S|$ under $f_n$.  If, as before, we let 
\[ f_n(s) \propto (c n^a)^{-s}, \quad s=0,1,\ldots,n, \]
for constants $a,c > 0$, then it can be shown that $\lambda_n = n^{-1} \E|S| = O(n^{-(a+1)})$.  For simplicity, in what follows, we take 
\begin{equation}
\label{eq:lambda}
\lambda_n = n^{-(a+1)}.
\end{equation}

Since the form of our prior distribution matches that of the variational approximation we seek, and the data are independent, it follows that the exact posterior distribution, $\pi^n$, also has that form.  Computation of the full posterior is doable---see \citet{martin2014asymptotically} and \citet{martin2019empirical}---but the posterior inclusion probabilities require MCMC.  It turns out that there is a simple and accurate variational approximation.  If, as in Section~\ref{SS:var.approx}, we work with a mean-field approxiation family of the form 
\[ \bigotimes_{i=1}^n \{ \phi_i \, \nm(\mu_i, \tau_i^2) + (1-\phi_i) \, \delta_0 \}, \]
then the corresponding update equations are 
\begin{align*}
\mu_i & = y_i \\
\tau_i^2 & = \sigma^2 (\alpha + \gamma)^{-1} \\
\logit(\phi_i) & = \logit(\lambda_n) + \tfrac12 \log\tfrac{\gamma}{\alpha+\gamma}+\tfrac{\alpha}{2\sigma^2}y_i^2.
\end{align*}
Of course, these are actually expressions for the estimates, not ``updates,'' and they determine the variational approximation $q^n$.  Having relatively simple expressions for the variational family parameter estimates makes it possible to establish some additional theoretical convergence properties, namely, selection consistency and valid uncertainty quantification; see Section~\ref{SS:theory.orthogonal} below.



\subsection{More asymptotic theory}
\label{SS:theory.orthogonal}

Since this sparse normal means model is a special case of the regression problem considered previously, we can immediately specialize Theorems~\ref{prediction_error}--\ref{effective dimension} to this case.  Analogous to the previous setting, we assume $|S_{\beta^\star}|=o(n)$ and that $|S^\star|\log(n/|S_{\beta^\star}|) < n$.  Then the minimax optimal concentration rate is 
\[ \eps_n^2(\beta^\star) = |S_{\beta^\star}| \log(n / |S_{\beta^\star}|). \]

\begin{thm}
\label{thm:means}
Let $\pi^n$ be the posterior based on the empirical prior described above, and $q^n$ the corresonding variational approximation.  
\begin{enumerate}
\item For any sequence $M_n > 0$ with $M_n \to \infty$, 
\[ \sup_{\beta^\star} \E_{\beta^\star} q^n(\{\beta \in \RR^n: \|\beta-\beta^\star\|_2^2 > M_n \eps_n^2(\beta^\star)\}) \to 0, \quad n \to \infty, \]
where the supremum is over all $\beta^\star$ with $|S_{\beta^\star}|=o(n)$.  
\item For any sequence $M_n > 1$ with $M_n \to \infty$, 
\[ \sup_{\beta^\star}\E_{\beta^\star} q^n(\{\beta\in \RR^p:|S_\beta|>M_n |S_{\beta^\star}|\}) \to 0, \quad n \to \infty,\]
where the supremum is over all $\beta^\star$ with $|S_{\beta^\star}|=o(n)$. 
\end{enumerate}
\end{thm}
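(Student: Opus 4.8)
The plan is to obtain Theorem~\ref{thm:means} as a direct specialization of the regression results Theorems~\ref{prediction_error}--\ref{effective dimension}, by exhibiting the normal means model \eqref{eq:means.model} as a genuine instance of the linear model \eqref{eq:reg.model}. First I would record the reduction carefully. Starting from \eqref{eq:reg.model} with an orthogonal $n\times p$ design ($X^\top X=I_p$, $p\le n$), the map $y\mapsto X^\top y$ gives $X^\top y\sim\nm_p(\beta,\sigma^2 I_p)$, which, after renaming $p$ as $n$ and rescaling the columns so that the design meets the standardization \eqref{eq:standardize} (i.e.\ becomes $\sqrt{n}\,I_n$), is exactly \eqref{eq:means.model}. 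Because this transformation acts on $y$ alone and $X^\top X$ is diagonal, the pairs $(S_j,\beta_j)$ already decouple in the posterior; hence the empirical prior, the posterior $\pi^n$, and the mean-field approximation $q^n$ for the means model are literally the objects of Section~\ref{SS:var.approx} specialized to this design, and the closed-form parameter estimates displayed above are just the updates \eqref{eqn:update_equ} with $(X^\top X)_{jk}=\delta_{jk}$.

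Next I would verify the hypotheses of Theorems~\ref{prediction_error}--\ref{effective dimension} for this design. Condition \eqref{eq:nps} is precisely what is assumed in the statement, with $p=n$; condition \eqref{eq:fullrank} is automatic since $X_S^\top X_S$ is a multiple of the identity for every $S$; and for \eqref{eq:eigen}, $\lambda_{\min}(n^{-1}X_{S^\star}^\top X_{S^\star})$ is a fixed positive constant (equal to $1$ after the standardization), while $|S^\star| p^{-1}=|S^\star|/n\to 0$ by \eqref{eq:nps}, so the bound holds for all large $n$. For a scaled-identity design the prediction semi-norm $\|X(\beta-\beta^\star)\|_2$ and the Euclidean norm $\|\beta-\beta^\star\|_2$ coincide up to the constant column scaling, and the rescaling factors introduced in the reduction cancel, so that $\|X(\beta-\beta^\star)\|_2^2$ in Theorem~\ref{prediction_error} corresponds, coordinate for coordinate, to $\|\beta-\beta^\star\|_2^2$ in the means model; likewise $\kappa_X(s)$ from \eqref{eq:kappa} is constant in $s$. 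With $p=n$ the rate \eqref{eq:rate} becomes $\eps_n^2(\beta^\star)=|S_{\beta^\star}|\log(n/|S_{\beta^\star}|)$, as stated. Assembling the pieces, part~1 of Theorem~\ref{thm:means} follows from Theorem~\ref{prediction_error} via this norm identification (Theorem~\ref{estimation_error} gives the same conclusion, the scaling of $\kappa_X$ cancelling against that of the reduction), and part~2 is exactly Theorem~\ref{effective dimension}, since the effective dimension $|S_\beta|$ is invariant under the coordinatewise rescaling.

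The step I expect to be the main obstacle is not conceptual but the bookkeeping in the reduction: one must confirm that the orthogonal transformation together with the column rescaling carries the empirical prior \eqref{prior_emp} — whose center and covariance both involve $(X_S^\top X_S)^{-1}$ — onto the prior $\nm_{|S|}(y_S,\gamma^{-1}\sigma^2 I_{|S|})$ used here, and that this same rescaling threads consistently through $\eps_n^2(\beta^\star)$ so that the $\sqrt{n}$ factors genuinely cancel; once this is checked, both conclusions are immediate. As an alternative that sidesteps the reduction, one can instead reprove the two displays directly along the lines of Appendix~\ref{App:proof}, invoking the Theorem~7 device of \citet{ray2019variational} together with the known concentration of $\pi^n$ for the sparse normal means model \citep{martin2014asymptotically,martin2019empirical}; here the argument is lighter because the variational parameters are in closed form, so $\mathrm{KL}(q^n\|\pi^n)$ can be evaluated explicitly, and the only estimate needed is that its $\prob_{\beta^\star}$-expectation is $O(\eps_n^2(\beta^\star))$, which reduces to controlling $\sum_{i\notin S^\star}\phi_i$ when $\lambda_n=n^{-(a+1)}$.
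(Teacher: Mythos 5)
Your proposal matches the paper's own treatment: the paper gives no separate proof of Theorem~\ref{thm:means}, stating only that the normal means model is a special case of the regression setup so that Theorems~\ref{prediction_error} and \ref{effective dimension} specialize immediately, which is exactly your reduction. Your explicit verification of \eqref{eq:fullrank} and \eqref{eq:eigen} and of the identification $\|X(\beta-\beta^\star)\|_2=\|\beta-\beta^\star\|_2$ for the (rescaled) identity design is more careful bookkeeping than the paper provides, but it is the same argument.
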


The variational approximation is very simple in this setting, so we can establish more than just these basic concentration rate result.  In particular, below we show that the variational approximation will, under some conditions, identify the correct configuration $S_{\beta^\star}$ asymptotically, which implies a variable/model selection consistency property.  Moreover, we also establish that certain marginal distributions derived from the variational approximation achieve valid uncertainty quantification.  

The next result shows that, asymptotically, the variational approximation $q^n$ will not assign positive mass to proper supersets of $S_{\beta^\star}$.  To ensure that all the signals are detectable, we will need an additional assumption about the magnitude of those non-zero $\beta_i^\star$ values.  Specifically, consider 
\begin{equation}
\label{eq:betamin}
\min_{i \in S_{\beta^\star}} |\beta_i^\star| \geq ( M k_\alpha^{-1} \log n )^{1/2}, \quad \text{for some $M > 2$}, 
\end{equation}
where $k_\alpha = \frac{\alpha}{2(1+\alpha)}$.  Up to constants, condition \eqref{eq:betamin} is equivalent to the ``beta-min condition'' common in the high-dimensional estimation literature.  

\begin{thm}
\label{thm:selection}
Let $q^n$ be the variational approximation with $\lambda_n$ in \eqref{eq:lambda}.  Then 
\[ \E_{\beta^\star} q^n(\{S: S \supset S_{\beta^\star}\}) \to 0, \quad n \to \infty. \]
Moreover, if $\beta^\star$ is such that \eqref{eq:betamin} holds, then $\E_{\beta^\star} q^n(\{S: S \not\supseteq S_{\beta^\star}\}) \to 0$.  If all the above conditions hold, then the two conclusions can be combined, which implies that 
\[ \E_{\beta^\star} q^n(S_{\beta^\star}) \to 1, \quad n \to \infty. \]
\end{thm}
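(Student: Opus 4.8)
The plan is to use the fact that, under $q^n$ in this orthogonal / normal-means setting, the configuration $S$ has independent Bernoulli coordinates, $S_i \sim \ber(\phi_i)$, with the inclusion probabilities given in closed form by $\logit(\phi_i) = \logit(\lambda_n) + \tfrac12\log\tfrac{\gamma}{\alpha+\gamma} + \tfrac{\alpha}{2\sigma^2}y_i^2$. The events $\{S = S_{\beta^\star}\}$, $\{S \supsetneq S_{\beta^\star}\}$, and $\{S \not\supseteq S_{\beta^\star}\}$ partition the set of all configurations, and $q^n(S_{\beta^\star}) = 1 - q^n(\{S \supsetneq S_{\beta^\star}\}) - q^n(\{S \not\supseteq S_{\beta^\star}\})$, so the third assertion follows by adding the first two; I would therefore prove those two limits separately.

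For the superset assertion, independence together with the elementary inequality $1 - \prod_i(1-x_i) \le \sum_i x_i$ for $x_i \in [0,1]$ gives
\[ q^n(\{S \supsetneq S_{\beta^\star}\}) = \Bigl(\prod_{i \in S_{\beta^\star}} \phi_i\Bigr)\Bigl(1 - \prod_{i \notin S_{\beta^\star}}(1-\phi_i)\Bigr) \le \sum_{i \notin S_{\beta^\star}} \phi_i. \]
For each null coordinate $y_i \sim \nm(0,\sigma^2)$, the inequality $\phi_i \le e^{\logit(\phi_i)}$ (i.e.\ $e^x/(1+e^x)\le e^x$) together with the chi-square moment generating function yields $\E_{\beta^\star}\phi_i \le \tfrac{\lambda_n}{1-\lambda_n}\bigl(\tfrac{\gamma}{\alpha+\gamma}\bigr)^{1/2}\E_{\beta^\star} e^{\alpha y_i^2/(2\sigma^2)} = \tfrac{\lambda_n}{1-\lambda_n}\bigl(\tfrac{\gamma}{\alpha+\gamma}\bigr)^{1/2}(1-\alpha)^{-1/2}$, which is finite precisely because $\alpha \in (0,1)$. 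Summing over the at most $n$ null coordinates and inserting $\lambda_n = n^{-(a+1)}$ leaves a bound of order $n^{-a} \to 0$.

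For the not-a-superset assertion, under \eqref{eq:betamin}, independence gives $q^n(\{S \not\supseteq S_{\beta^\star}\}) = 1 - \prod_{i \in S_{\beta^\star}}\phi_i \le \sum_{i \in S_{\beta^\star}}(1-\phi_i)$, and now $1-\phi_i \le e^{-\logit(\phi_i)}$. For a signal coordinate $y_i \sim \nm(\beta_i^\star,\sigma^2)$, a direct Gaussian integral (complete the square in $\int e^{-z^2/2} e^{-\alpha(\beta_i^\star + \sigma z)^2/(2\sigma^2)}\,dz$) gives $\E_{\beta^\star} e^{-\alpha y_i^2/(2\sigma^2)} = (1+\alpha)^{-1/2}\exp\{-k_\alpha (\beta_i^\star)^2/\sigma^2\}$ with exactly the constant $k_\alpha = \tfrac{\alpha}{2(1+\alpha)}$ that appears in \eqref{eq:betamin}; hence
\[ \E_{\beta^\star}(1-\phi_i) \le \tfrac{1-\lambda_n}{\lambda_n}\bigl(\tfrac{\alpha+\gamma}{\gamma}\bigr)^{1/2}(1+\alpha)^{-1/2}\exp\{-k_\alpha (\beta_i^\star)^2/\sigma^2\}. \]
Since \eqref{eq:betamin} is calibrated so that $k_\alpha (\beta_i^\star)^2 \ge M \log n$, and $\lambda_n^{-1} = n^{a+1}$, each summand is of order $n^{a+1-M}$; summing over the $|S_{\beta^\star}| = o(n)$ signal coordinates leaves a bound of order $o(n^{a+2-M})$, which vanishes provided $M$ exceeds $a+2$, and this is what the stated condition guarantees in the regime of interest where $a$ is a small fixed positive constant.

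The main obstacle is the constant bookkeeping in this last step: one has to check that the polynomial blow-ups $\lambda_n^{-1} = n^{a+1}$ and $|S_{\beta^\star}|$ are dominated by the exponential $n^{-M}$ produced by the beta-min gap, and, in the superset part, that the Gaussian tail integral $\E_{\beta^\star} e^{\alpha y_i^2/(2\sigma^2)}$ is finite --- the only place where $\alpha < 1$ is used. Apart from that, the proof reduces to the inequality $e^x/(1+e^x) \le e^x$ and two elementary one-dimensional Gaussian moment generating function evaluations.
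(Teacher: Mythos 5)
Your proof is correct, and it reaches the same conclusions by a slightly different route than the paper. The core ingredients are identical: the product--Bernoulli form of $q^n(S)$, the bounds $\phi_i \le e^{\logit(\phi_i)}$ and $1-\phi_i \le e^{-\logit(\phi_i)}$, the central and non-central chi-square moment generating functions (your completed-square integral gives exactly the paper's $\E_{\beta_i^\star} e^{-\logit(\phi_i)} \propto e^{-k_\alpha \beta_i^{\star2}}$ with $k_\alpha = \tfrac{\alpha}{2(1+\alpha)}$), and the rates $n\lambda_n = n^{-a}$ and $\lambda_n^{-1}e^{-k_\alpha\Delta^2}$. Where you differ is the decomposition: the paper bounds $q^n(S) \le q^n(S)/q^n(S^\star)$ and then sums over configurations $S$ with $|S| \le C|S^\star|$ using binomial-coefficient counting, whereas you apply the coordinate-wise union bounds $q^n(\{S \supsetneq S^\star\}) \le \sum_{i \notin S^\star}\phi_i$ and $q^n(\{S \not\supseteq S^\star\}) \le \sum_{i \in S^\star}(1-\phi_i)$. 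Your version is cleaner and, notably, covers \emph{all} supersets of $S^\star$ without the size restriction $|S| \le C|S^\star|$ that the paper imposes in its sum without separate justification, so in that respect your argument is actually tighter. Two bookkeeping remarks: the paper's displayed $\logit(\phi_i)$ in the proof suppresses $\sigma^2$ (effectively taking $\sigma^2=1$), which is why your exponent carries an extra $1/\sigma^2$ relative to \eqref{eq:betamin}; and your observation that the signal part needs $M > a+2$ rather than the stated $M>2$ (to absorb $\lambda_n^{-1} = n^{a+1}$ and the sum over $|S^\star| = o(n)$ coordinates) is a fair criticism that applies equally to the paper's own bound, whose final factor is $|S^\star|\,e^{-\logit(\lambda_n)-k_\alpha\Delta^2} \lesssim |S^\star|\, n^{a+1-M}$; neither argument closes this gap for arbitrary $a>0$, so your explicit flagging of it is a point in your favor rather than a defect.
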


Next, one might be interested in the coverage probability of the credible sets derived from the variational approximation.  Such sets might include marginal credible intervals for individual $\beta_i$ or perhaps a linear combination of the full $\beta$ vector.  In the present context, we can prove that these marginal credible intervals achieve the target frequentist coverage probability asymptotically.  More precisely, let $w \in \RR^n$ be some fixed vector and define the linear functional $\omega = w^\top \beta$ of the full $\beta$ vector.  If $w$ is a standard basis vector, then the results below can be used to derive valid credible sets for an individual entry $\beta_i$ which, in turn, could be applied to all the entries to obtain a componentwise credible band.  Similarly, $w$ could consist of certain contrasts.  Regardless, the corresponding marginal posterior distribution for $\omega$ under the variational approximation $q^n$, which we denote by $q_\omega^n$, is given by 
\[ q_\omega^n(A) = \sum_S q^n(S) \nm(A \mid \hat\omega_S, \sigma^2 v_\alpha \|w_S\|^2), \quad A \subseteq \RR, \]
where $\hat\omega_S = w_S^\top y_S$ and $v_\alpha = (\alpha + \gamma)^{-1}$.  Intuitively, since $q^n(S_{\beta^\star}) \to 1$ according to  Theorem~\ref{thm:selection}, we expect 
\[ q_\omega^n(A) \approx q_\omega^{n,\oracle}(A) := \nm(A \mid \hat\omega_{S^\star}, \sigma^2 v_\alpha \|w_{S^\star}\|^2), \quad \text{all large $n$}. \]
As the following theorem demonstrates, this intuition is correct.  Moreover, the above approximation is sufficiently strong that credible intervals based on $q_\omega^n$ on the left-hand side are approximately credible intervals for the normal distribution on the right-hand side above.  And since the latter are known to be valid confidence intervals, the former must be so too, at least approximately.  

Without loss of generality, for $\zeta \in (0,\frac12)$, we consider $100(1-\zeta)$\% credible upper bounds for $\omega$ of the form 
\[ (-\infty, \bar\omega_\zeta] \quad \text{and} \quad (-\infty, \bar\omega_\zeta^\oracle] \]
based on $q_\omega^n$ and the oracle normal posterior $q_\omega^{n,\oracle}$, respectively.  These are simply upper $\zeta$-quantiles of these two posterior distributions, which (implicitly) depend on data through the posteriors.  The claim is that the coverage probability of the latter $100(1-\zeta)$\% credible upper bound is approximately equal to $1-\zeta$, i.e., 
\begin{equation}
\label{eq:interval.coverage}
\prob_{\beta^\star}(\bar\omega_\zeta < w^\top \beta^\star) \leq \zeta + o(1), \quad \text{as $n \to \infty$}. 
\end{equation}
Of course, if $v_\alpha \geq 1$ or, equivalently, if $\alpha + \gamma \leq 1$, then the former posterior's credible bounds have exact coverage probability in the sense that $\prob_{\beta^\star}(\bar\omega_\zeta^\oracle < w^\top \beta^\star) \leq \zeta$ for all $n$.  

\begin{thm}
\label{thm:uq.interval}
Under the conditions of Theorem~\ref{thm:selection}, if $\alpha + \gamma \leq 1$, then the coverage probability of $100(1-\zeta)$\% approximate posterior credible upper bound for $\omega$ under $q_\omega^n$ is approximately $1-\zeta$ as $n \to \infty$.
\end{thm}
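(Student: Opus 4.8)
The plan is to use Theorem~\ref{thm:selection} to show that, with $\prob_{\beta^\star}$-probability tending to $1$, the variational posterior $q_\omega^n$ is within $o(1)$ in Kolmogorov distance of the oracle normal posterior $q_\omega^{n,\oracle}$, then to push this through to a one-sided comparison of their upper $\zeta$-quantiles, and finally to evaluate the oracle's coverage by an exact Gaussian-tail computation. Throughout write $S^\star=S_{\beta^\star}$ and $\omega^\star=w^\top\beta^\star$, and recall $v_\alpha=(\alpha+\gamma)^{-1}$. \emph{Step 1 (a high-probability event).} Theorem~\ref{thm:selection}, whose hypotheses we assume, gives $r_n:=\E_{\beta^\star}q^n(\{S\ne S^\star\})\to 0$. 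Fix a sequence $\eta_n\downarrow 0$ with $r_n/\eta_n\to 0$, e.g.\ $\eta_n=r_n^{1/2}$, and set $E_n=\{q^n(\{S\ne S^\star\})\le\eta_n\}$; Markov's inequality gives $\prob_{\beta^\star}(E_n^c)\le r_n/\eta_n\to 0$.

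\emph{Step 2 (from distributions to quantiles).} From the mixture representation $q_\omega^n(A)=\sum_S q^n(S)\,\nm(A\mid\hat\omega_S,\sigma^2 v_\alpha\|w_S\|^2)$ and $q_\omega^{n,\oracle}(A)=\nm(A\mid\hat\omega_{S^\star},\sigma^2 v_\alpha\|w_{S^\star}\|^2)$, on $E_n$ one has, for every Borel $A\subseteq\RR$,
\[ |q_\omega^n(A)-q_\omega^{n,\oracle}(A)|\le 2\,q^n(\{S\ne S^\star\})\le 2\eta_n. \]
Let $\bar\omega_\zeta=\inf\{t:q_\omega^n((-\infty,t])\ge 1-\zeta\}$ and, analogously, $\bar\omega_{\zeta'}^\oracle=\inf\{t:q_\omega^{n,\oracle}((-\infty,t])\ge 1-\zeta'\}$. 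By right-continuity of the distribution function, $q_\omega^n((-\infty,\bar\omega_\zeta])\ge 1-\zeta$, so on $E_n$ we get $q_\omega^{n,\oracle}((-\infty,\bar\omega_\zeta])\ge 1-\zeta-2\eta_n$; hence $\bar\omega_\zeta$ belongs to the defining set of $\bar\omega_{\zeta+2\eta_n}^\oracle$, i.e.\ $\bar\omega_\zeta\ge\bar\omega_{\zeta+2\eta_n}^\oracle$. Consequently $\{\bar\omega_\zeta<\omega^\star\}\cap E_n\subseteq\{\bar\omega_{\zeta+2\eta_n}^\oracle<\omega^\star\}$, and therefore
\[ \prob_{\beta^\star}(\bar\omega_\zeta<\omega^\star)\le\prob_{\beta^\star}(\bar\omega_{\zeta+2\eta_n}^\oracle<\omega^\star)+\prob_{\beta^\star}(E_n^c). \]

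\emph{Step 3 (oracle coverage and conclusion).} In the normal means model $\hat\omega_{S^\star}=w_{S^\star}^\top y_{S^\star}=\omega^\star+\sigma\,w_{S^\star}^\top\epsilon_{S^\star}$, so $\hat\omega_{S^\star}-\omega^\star\sim\nm(0,\sigma^2\|w_{S^\star}\|^2)$. Writing $z_{\zeta'}$ for the upper $\zeta'$-quantile of $\nm(0,1)$ and $\Phi$ for its distribution function, $\bar\omega_{\zeta'}^\oracle=\hat\omega_{S^\star}+z_{\zeta'}\sigma\sqrt{v_\alpha}\,\|w_{S^\star}\|$, whence
\[ \prob_{\beta^\star}(\bar\omega_{\zeta'}^\oracle<\omega^\star)=\Phi\bigl(-z_{\zeta'}\sqrt{v_\alpha}\bigr). \]
Since $\alpha+\gamma\le 1$ is equivalent to $v_\alpha\ge 1$, and $z_{\zeta'}>0$ whenever $\zeta'<\tfrac12$ (which holds for $\zeta'=\zeta+2\eta_n$ and all large $n$, as $\zeta<\tfrac12$), we obtain $\Phi(-z_{\zeta'}\sqrt{v_\alpha})\le\Phi(-z_{\zeta'})=\zeta'=\zeta+2\eta_n$. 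Combining with Step~2,
\[ \prob_{\beta^\star}(\bar\omega_\zeta<\omega^\star)\le\zeta+2\eta_n+\prob_{\beta^\star}(E_n^c)=\zeta+o(1), \]
which is exactly \eqref{eq:interval.coverage}; the $o(1)$ depends only on $n$, not on $w$ or $\beta^\star$, so the asserted asymptotic validity holds uniformly.

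\emph{Anticipated main obstacle.} Steps~1 and~3 are essentially bookkeeping around Theorem~\ref{thm:selection} and a routine Gaussian tail identity; the one point requiring care is Step~2, namely converting a two-sided uniform (Kolmogorov-distance) bound between the mixture $q_\omega^n$ and the single Gaussian $q_\omega^{n,\oracle}$ into a \emph{one-sided} lower bound on the quantile $\bar\omega_\zeta$, while the event $E_n$ on which that bound holds is itself random in $y$. The union-bound decomposition in Step~2 isolates the randomness of $E_n$, and the infimum-based definition of the quantiles makes the one-sided passage valid without any continuity or non-degeneracy assumption on the oracle (the degenerate case $w_{S^\star}=0$, where $q_\omega^{n,\oracle}=\delta_0$, is covered by the same inequalities, the coverage probability in Step~3 then being $0$). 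A secondary bookkeeping point is simply verifying that the slack $2\eta_n$ introduced in the quantile comparison is absorbed harmlessly, which is immediate because $\Phi(-z_{\zeta+2\eta_n}\sqrt{v_\alpha})\le\zeta+2\eta_n\to\zeta$.
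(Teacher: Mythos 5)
Your proof is correct. It follows the same overall strategy as the paper's: use Theorem~\ref{thm:selection} plus Markov's inequality to show $q_\omega^n$ is close to the oracle Gaussian $q_\omega^{n,\oracle}$ in total variation with high probability, compare the credible upper bound to the oracle's quantile, and finish with the exact Gaussian computation $\Phi(-z_{\zeta'}\sqrt{v_\alpha}) \leq \zeta'$ under $v_\alpha \geq 1$. Where you differ is in how the quantile comparison is executed. The paper perturbs the \emph{location}: it fixes $t>0$, splits on whether $\bar\omega_\zeta^\oracle \geq \omega^\star + \sigma\|w_{S^\star}\|t$, shows that the bad event forces a total variation gap of at least $(1-\zeta)-\Phi(z_\zeta - v_\alpha^{-1/2}t)$ (which vanishes in probability by Markov), and then sends $t\to 0$ after $n\to\infty$. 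You instead perturb the \emph{level}: on the high-probability event $E_n$ the Kolmogorov bound $2\eta_n$ gives directly $\bar\omega_\zeta \geq \bar\omega_{\zeta+2\eta_n}^\oracle$, and the slack $2\eta_n$ is absorbed into the coverage level. Your version avoids the iterated limit, handles the degenerate case $w_{S^\star}=0$ cleanly via the infimum definition of the quantile, and is arguably tidier; the paper's version makes the role of the total variation bound \eqref{eq:tv} more explicit. One small caveat: your closing remark that the $o(1)$ is uniform in $\beta^\star$ and $w$ overreaches, since $r_n = \E_{\beta^\star} q^n(\{S\neq S^\star\})$ inherits its rate from Theorem~\ref{thm:selection}, which is stated pointwise in $\beta^\star$ (subject to the beta-min condition); this does not affect the theorem as stated, which makes no uniformity claim.
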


\subsection{Numerical illustrations}
\label{SS:examples.orthogonal}

In this section, we consider
the normal means model described in Section~\ref{orthogonal}. For this model, we are interested at comparing VB-Laplace and VB-Empirical since these two methods have strong theoretical support. For simplification, we compare VB-Laplace and VB-empirical with known $\lambda_n$ as in \eqref{eq:lambda}, that is, neither method has hyperparameters to be updated.  We explore six different settings as follows, where the first three cases have large signals and the last three cases have small signals.
\begin{enumerate}
    \item $n=500,\; s=50,\; \beta_i=10$.
    \item $n=1000,\;s=100,\;\beta_i=10$.
    \item $n=2000,\;s=200,\;\beta_i=10$.
    \item $n=500,\; s=50,\; \beta_i=2$.
    \item $n=1000,\;s=100,\;\beta_i=2$.
    \item $n=2000,\;s=200\;,\beta_i=2$. 
\end{enumerate}
For each case, we compare the averaged $\ell_2$ estimation error and averaged credible interval length for important ones. The simulation results are recorded in Table~\ref{table:identity}. We further plot the averaged coverage probabilities for the first 20\% variables in each case in Figure~\ref{fig:identity}. In all cases, VB-empirical have smaller $l_2$ estimation error and higher averaged coverage probabilities than VB-Laplace. For the first three big signal cases, VB-Laplace and VB-empirical have nearly the same averaged credible region length for those variables identified as important 
ones but differ a lot in mean estimations, which is the main reason for the differences regarding inclusion probabilities. For the last three small signal cases, these two variational methods have similar mean estimation error but VB-empirical have larger credible region length than VB-Laplace, and the wider interval length explains the reason that VB-empirical still has higher coverage probabilities for small signal cases. Also, in this normal means case, VB-empirical only need one update while VB-Laplace still require a few iterations to get convergence, hence, VB-empirical is more efficient in this setting than VB-Laplace.  

\begin{table}[t]
\begin{center}
\begin{tabular}{c c c c c c}
 \hline
 Case & Method&$\E\|\hat\beta-\beta^\star\|^2$ (SE) & Mean length \\
 \hline
  1&VB-Laplace&10.00(0.90)&3.92(0.001)\\
    &VB-Emp&7.04(0.66)&3.92(0.001)\\
\hline
 2&VB-Laplace&14.10(0.90)&3.92(0.002)\\
    &VB-Emp&9.93(0.74)&3.92(0.001)\\
 \hline
 3&VB-Laplace&19.96(1.00)&3.92(0.002)\\
    &VB-Emp&14.08(0.71)&3.92(0.001)\\
     \hline
 4&VB-Laplace&13.87(0.13)&2.57(0.66)\\
    &VB-Emp&13.78(0.18)&3.57(0.66)\\
     \hline
 5&VB-Laplace&19.76(0.09)&3.48(0.43)\\
    &VB-Emp&19.64(0.15)&4.34(0.43)\\
     \hline
 6&VB-Laplace&28.07(0.07)&3.51(0.70)\\
    &VB-Emp&27.94(0.12)&4.22(0.70)\\
 \hline
\end{tabular}
\end{center}
\caption{Results for Simulation IV}
\label{table:identity}
\end{table}

\begin{figure}[t]
\begin{center}
\subfigure[Case 1]{\includegraphics[width=0.31\textwidth]{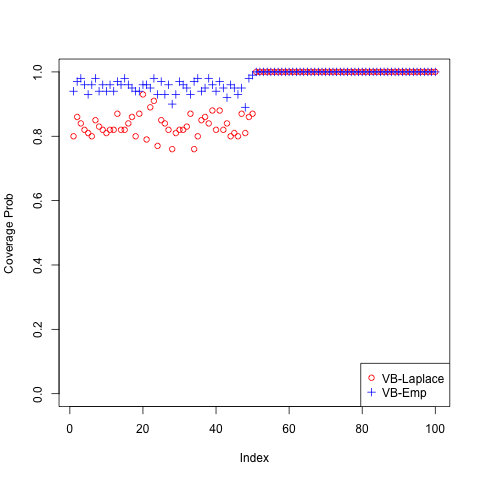}}
\subfigure[Case 2]{\includegraphics[width=0.31\textwidth]{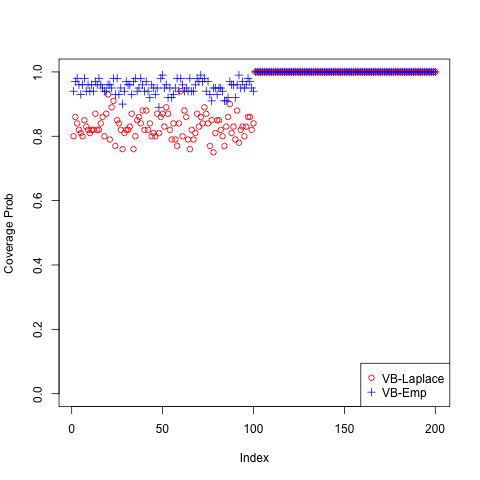}}
\subfigure[Case 3]{\includegraphics[width=0.31\textwidth]{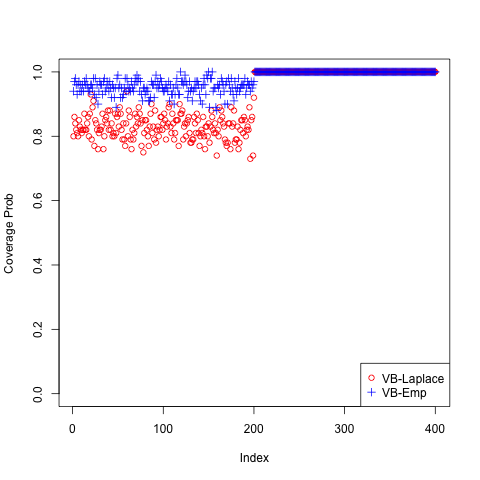}}
\subfigure[Case 4]{\includegraphics[width=0.31\textwidth]{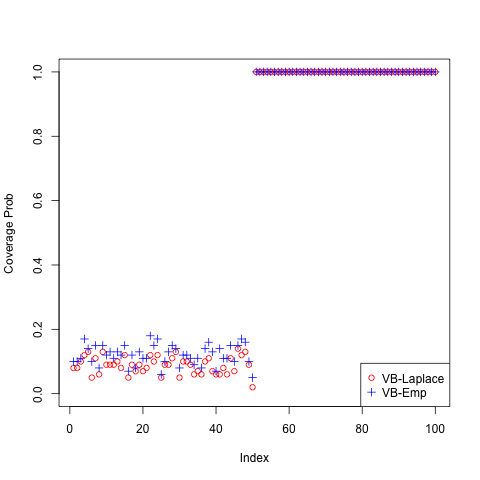}}
\subfigure[Case 5]{\includegraphics[width=0.31\textwidth]{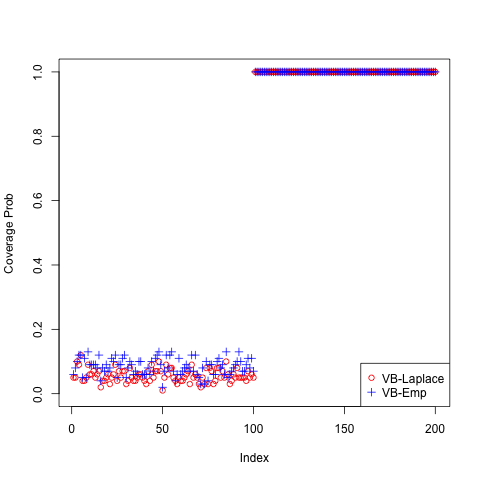}}
\subfigure[Case 6]{\includegraphics[width=0.31\textwidth]{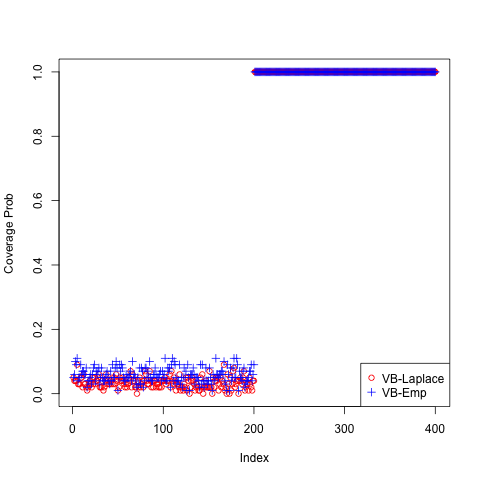}}
\end{center}
\caption{Averaged coverage probability for the first 20\% variables}
\label{fig:identity}
\end{figure}

\section{Conclusion} 
\label{S:discuss}

In this paper, we have proposed a variational approximation to the posterior distribution resulting from the empirical prior framework in \citet{martin2017empirical}.  In particular, this approximate posterior ignores the correlation between the entries of the vector $\beta$ inherent in the full posterior and, in return, can be computed very fast.  Aside from computational efficiency, our theoretical results show that this approximation makes no sacrifices in terms of asymptotic concentration rates compared to the full posterior.  Finally, we develop a coordinate ascent-based algorithm that borrows certain features from the importance sampling strategy in \citet{carbonetto2012scalable} and has superior empirical performance compared to other variational approximation methods across a range of different simulation scenarios.  

An advantage of the empirical prior formulation is that the posterior computations are generally faster/easier than their fixed prior counterparts.  This is because the data-driven center makes the prior tails less crucial to the posterior concentration properties, hence simple conjugate forms can be used.  While this conjugacy is helpful for the structure-specific parameters, high-dimensional problems like the one considered here have an unknown structure---e.g., the configuration $S$---and posterior sampling of the structure will require MCMC.  Since MCMC is relatively slow and cannot be completely avoided when working with the genuine posterior, there remains an interest in finding fast approximations, and this is what motivated our efforts here.  However, there are other examples involving structured, high-dimensional parameters where computational efficiency can be gained by working with variational approximations.  For example, \citet{eb.gwishart} and \citet{ebpiecep} develop empirical priors and posterior concentration rate results for sparse, high-dimensional precision matrix and piecewise polynomial signal estimation, respectively, and both could benefit from accelerated posterior computations via variational approximations.  We leave this as a topic for future work.

\appendix

\section{Derivation of the update equations}
\label{derivation}

In this section, we present our derivation details for update equations in  \eqref{eqn:update_equ}.  The relevant quantity is the following log-ratio:
\begin{align*}
\log \frac{\Tilde{\pi}^n(S,\beta)}{q_{\theta}(S,\beta)} & = -\tfrac{\alpha}{2\sigma^2}\bigl( \beta_S^\top X_S^\top X_S \beta_S - 2\beta_S^\top X_S^\top y \bigr) \\
& \qquad - \tfrac{\gamma}{2\sigma^2} (\beta_S - \hat\beta_S)^\top X_S^\top X_S (\beta_S - \hat\beta_S) - \textstyle\sum_{j=1}^p (1-S_j )\log(1-\phi_j) \\
& \qquad - \textstyle\sum_{j=1}^p S_j \bigl\{ \log \phi_j - \tfrac12\log 2\pi - \tfrac12\log \tau_j^2 - \tfrac{1}{2\tau_j^2}(\beta_j - \mu_j)^2 \bigr\} - \log\textstyle \binom{p}{|S|} \\
& \qquad + \tfrac12 \log|\gamma X_S^\top X_S| - |S|(\tfrac12\log\sigma^2 + \tfrac12\log 2\pi + \log c + a\log p) 
\end{align*}
where $\theta$ consists of $\{(\mu_j, \tau_j^2, \phi_j): j=1,\ldots,p\}$, $\hat\beta_S$ is the least squares estimator under model $S$, and, with a slight abuse of notation, $S$ denotes both a subset of $\{1,\ldots,p\}$ and a vector $(S_1,\ldots,S_p)$ of binary variables; we have also ignored constants (in $\theta$) in the right-hand side above as these do not affect the optimization.  The next step is to take expectation of the above expression with respect to $(S,\beta) \sim q_\theta$, i.e., with respect to the variational approximation.  Some of the terms will be easy to deal with while others are more involved.  It will help to give those challenging terms names:
\begin{align*}
A & = \log\textstyle\binom{p}{|S|} \\
B & = (\beta_S - \hat\beta_S)^\top X_S^\top X_S (\beta_S - \hat\beta_S) \\
C & = \log |\gamma X_S^\top X_S| \\
D & = \beta_S^\top X_S^\top X_S \beta_S - 2\beta_S^\top X_S^\top y.
\end{align*}
We will evaluate/approximate the expected value of each in turn.  
\begin{itemize}
\item Note that $\E(A)$ under the variational approximation will be a non-linear function of the vector $\phi$, which would be non-trivial to differentiate, etc., so we seek a simpler approximation.  For any integer $s \leq p$, recall that 
\[ s \log(p / s) \leq \log\textstyle\binom{p}{s} \leq s + s \log(p / s) \]
If $s \ll p$, then both the lower and upper bounds are relatively large, but mostly insensitive to small changes in $s$.  Under the posterior $\pi^n$ being approximated, we expect $|S| \ll p$ and, therefore, $\E(A) = \E\{\log\binom{p}{|S|}\}$ would not be sensitive to small changes in $\phi$.  Consequently, the gradient of $\E(A)$ would be small, so we opt to simply ignore this term when solving the optimization problem.   
\item Next, for $\E(B)$ under the variational approximation, note that the conditional distribution of $\beta_S$, given $S$, is $\nm_{|S|}(\mu_S, D_S)$, where $D_S = \text{diag}(\tau_S^2)$.  By iterated expectation and the familiar expression for expected value of quadratic forms, we get 
\begin{align*}
\E(B) & = \E\{ \trace(X_S^\top X_S D_S) + (\mu_S - \hat\beta_S)^\top X_S^\top X_S (\mu_S - \hat\beta_S) \} \\
& = n \sum_{j=1}^p \tau_j^2 \E(S_j) + \E\{(\mu_S - \hat\beta_S)^\top X_S^\top X_S (\mu_S - \hat\beta_S)\}, 
\end{align*}
where the remaining expectation is with respect to the marginal (variational) distribution of $S$.  Of course, $\E(S_j) = \phi_j$, but the second term requires some more work. As a first step, recall that 
\[ \lambda_{\min}(S) \|\mu_S - \hat\beta_S\|^2 \leq (\mu_S - \hat\beta_S)^\top X_S^\top X_S (\mu_S - \hat\beta_S) \leq \lambda_{\max}(S) \|\mu_S - \hat\beta_S\|^2, \]
where $\lambda_{\min}(S)$ and $\lambda_{\max}(S)$ are the minimum and maximum eigenvalues of $X_S^\top X_S$, respectively.  The lower and upper bounds are extremes, and it is not unreasonable to approximate the quadratic form on the inside by an ``average'' value of the two extremes.  In particular, we take 
\[ (\mu_S - \hat\beta_S)^\top X_S^\top X_S (\mu_S - \hat\beta_S) \approx g(S) \|\mu_S - \hat\beta_S\|^2, \]
where $g(S)$ is the geometric mean of the eigenvalues of $X_S^\top X_S$.  Moreover, as discussed in Section~\ref{SS:algorithm}, it is also not unreasonable to assume that $g(S)$ is relatively stable in $S$, which suggests a further approximation  
\[ (\mu_S - \hat\beta_S)^\top X_S^\top X_S (\mu_S - \hat\beta_S) \approx g(\tilde S) \|\mu_S - \hat\beta_S\|^2,\]
where $\tilde S$ is the configuration selected by the lasso estimator $\tilde\beta$.  We propose to plug in $\tilde\beta_S$, the sub-vector of the lasso estimator $\tilde\beta$ corresponding to configuration $S$; this is computationally than working with $\hat\beta_S$ because the latter is not a sub-vector, it requires a separate calculation, $(X_S^\top X_S)^{-1} X_S^\top y$, for each $S$.  
Finally we have
\begin{equation}
\label{eq:B}
\E(B) \approx \sum_{j=1}^p  n\tau_j^2 \phi_j +g(\tilde S) \sum_{i=1}^p \phi_j (\mu_j - \tilde \beta_j)^2. 
\end{equation}

\item For $C$, note first that we have $\log|\gamma X_S^\top X_S| = |S| \log\{\gamma g(S)\}$, where $g(S)$ is the geometric mean as described above.  From our discussion above, based on our full rank assumption in \eqref{eq:fullrank}, our standardization strategy in \eqref{eq:standardize}, and the remarks in Section~\ref{SS:algorithm}, 
it is not unreasonable to assume that $g(S)$ is relatively stable across $S$, so we end up with the approximation 
\begin{equation}
\label{eq:C}
\E(C) \approx \log\{\gamma g(\tilde S)\} \sum_{j=1}^p \phi_j, 
\end{equation}
where $\tilde S$ is, e.g., the configuration chosen by lasso.  
\item Finally, for $D$, again by iterated expectation and the general expression for expected values of quadratic forms, we have 
\[ \E(D) = \E(\trace(X_S^\top X_S D_S)) + \E(\mu_S^\top X_S^\top X_S \mu_S) - 2y^\top \E(X_S \mu_S), \]
where the expectation on the right-hand side is with respect to $S$ under the variational approximation.  Similar calculations as with $B$ above, yield 
\begin{equation}
\label{eq:D}
\E(D) = n\sum_{j=1}^p \phi_j (\tau_j^2+\mu_i^2) + \sum_{j=1}^p \sum_{k\neq j}^p \phi_j \phi_k (X^\top X)_{jk} \mu_j \mu_k - 2 \sum_{i=1}^n y_i \sum_{j=1}^p x_{ij} \phi_j \mu_j. 
\end{equation}
\end{itemize}
Putting everything together, we find that the function to be optimized is 
\begin{align*}
\E\log \frac{\Tilde{\pi}^n(S,\beta)}{q_{\theta}(S,\beta)} & \approx -\frac{\alpha \E(D)}{2\sigma^2} - \frac{\gamma \E(B)}{2\sigma^2} - \sum_{j=1}^p (1-\phi_j)\log(1-\phi_j) \\
& \qquad - \sum_{j=1}^p \phi_j \bigl\{ \log \phi_j - \tfrac12\log 2\pi - \tfrac12\log \tau_j^2 - \tfrac{1}{2} \bigr\} + \frac{\E(C)}{2} \\
& \qquad - (\log\sigma^2 + \log 2\pi + \log c + a\log p) \sum_{j=1}^p \phi_j
\end{align*}
Plugging \eqref{eq:B}, \eqref{eq:C}, and \eqref{eq:D} into the above expression, we get 

\begin{align*}
 \sum_{j=1}^p \Big[&-\frac{\alpha}{2\sigma^2}\Bigl\{ n \phi_j (\tau_j^2+\mu_j^2) +  \phi_j \mu_j\sum_{k\neq j}^p (X^\top X)_{jk} \phi_k \mu_k - 2 \phi_j \mu_j \sum_{i=1}^n x_{ij}y_i\Bigr\} \\
& -\frac{\gamma}{2\sigma^2}\bigl\{ n\tau_j^2 \phi_j +g_{\tilde S} \phi_j (\mu_j - \tilde \beta_j)^2 \bigr\}-(1-\phi_j)\log(1-\phi_j)\\
&-\phi_j \bigl( \log \phi_j - \tfrac12\log 2\pi - \tfrac12\log \tau_j^2 - \tfrac{1}{2} \bigr)+\frac{\log(\gamma g_{\tilde{S}})}{2}\phi_j\\
&-\phi_j (\tfrac12\log\sigma^2 + \tfrac12\log 2\pi + \log c + a\log p)\Big]
\end{align*}

We see that there is an overall sum over $j=1,\ldots,p$, and differentiating with respect to each of $\phi_j$, $\mu_j$, and $\tau_j$ leads to the updates equations in \eqref{eqn:update_equ}.

\section{Background theory}
\label{summarytheory}

In this section, we present a summary of some results that will be used in the proofs of our main theorems.  The driving result is Theorem~7 in \citet{ray2019variational}, which shows how the probabilities of certain events under the variational approximation, $q^n$, can be controlled in terms of the probability of the same event under the true posterior, $\pi^n$, and the Kullback--Leibler divergence of the latter from the former.  

\begin{prop}
\label{prop:Th7}
Let $B_n \subset \RR^p$ be a sequence of events about $\beta$.  If, 
\[ \E_{\beta^\star} \pi^n(B_n) \leq C e^{-\delta_n}, \]
for some constant $C > 0$ and sequence $\delta_n > 0$, then 
\[ \E_{\beta^\star} q^n(B_n) \leq 2 \delta_n^{-1} \bigl\{ \E_{\beta^\star} K(q^n, \pi^n) + C e^{-\delta_n / 2} \bigr\}, \]
where $K(q^n, \pi^n)$ is the Kullback--Leibler divergence of $\pi^n$ from $q^n$. 
\end{prop}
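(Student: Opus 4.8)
The plan is to obtain the bound directly from the Gibbs variational principle, i.e., the Donsker--Varadhan representation of Kullback--Leibler divergence: for any probability measures $q$ and $p$ on a common measurable space and any bounded measurable $h$, one has $\E_q h \le \log \E_p e^{h} + K(q,p)$, where $K(q,p) = \E_q \log(dq/dp)$. Note that the variational (ELBO-maximizing) characterization of $q^n$ plays no role in the proposition itself; it enters only downstream, to control $\E_{\beta^\star} K(q^n,\pi^n)$ in Theorems~\ref{prediction_error}--\ref{estimation_error} by comparing $q^n$ with a convenient member of the mean-field family $\Qset$.

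First I would fix the data, so that $\pi^n$ and $q^n$ are deterministic probability measures, and apply the inequality above with $p = \pi^n$, $q = q^n$, and $h = \tfrac{\delta_n}{2}\mathbf{1}_{B_n}$. The left-hand side is $\tfrac{\delta_n}{2}\, q^n(B_n)$. For the log-moment term, $\E_{\pi^n} e^{h} = 1 + (e^{\delta_n/2}-1)\,\pi^n(B_n)$, so by $\log(1+x)\le x$ we get $\log \E_{\pi^n} e^{h} \le (e^{\delta_n/2}-1)\,\pi^n(B_n) \le e^{\delta_n/2}\,\pi^n(B_n)$. Combining,
\[
\tfrac{\delta_n}{2}\, q^n(B_n) \;\le\; e^{\delta_n/2}\,\pi^n(B_n) + K(q^n,\pi^n).
\]
Next I would take $\E_{\beta^\star}$ of both sides (legitimate since $h$ is bounded and all quantities are nonnegative, so monotonicity and linearity of expectation apply even if a term is infinite), plug in the hypothesis $\E_{\beta^\star}\pi^n(B_n) \le C e^{-\delta_n}$ so that $e^{\delta_n/2}\,\E_{\beta^\star}\pi^n(B_n) \le C e^{-\delta_n/2}$, and finally multiply through by $2/\delta_n$; this yields precisely $\E_{\beta^\star} q^n(B_n) \le 2\delta_n^{-1}\{\E_{\beta^\star} K(q^n,\pi^n) + C e^{-\delta_n/2}\}$. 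The choice $h=\tfrac{\delta_n}{2}\mathbf{1}_{B_n}$, rather than some other multiple of $\mathbf{1}_{B_n}$, is exactly what makes the contributions $e^{\delta_n/2}$ and $e^{-\delta_n}$ combine to $e^{-\delta_n/2}$.

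The argument is short, so there is no real obstacle, only a couple of points to keep clean. The Donsker--Varadhan inequality presumes $q^n \ll \pi^n$; in our setting both measures are mixtures, over configurations $S$, of a continuous density on $\RR^{|S|}$ times point masses at the origin, so absolute continuity can fail if $q^n$ charges a configuration that $\pi^n$ excludes (e.g.\ one with $|S| > \mathrm{rank}(X)$, since $f_n$ vanishes there). In that event $K(q^n,\pi^n) = +\infty$ and the claimed bound holds trivially, so one may assume $q^n \ll \pi^n$ throughout; under the setup of the later theorems this is automatic. The only other thing to verify is measurability of $y \mapsto q^n(B_n)$ and $y \mapsto K(q^n,\pi^n)$, so that the $\E_{\beta^\star}$ operations are meaningful, which follows from the explicit, measurable dependence of $\hat\theta$, hence of $q^n = q_{\hat\theta}$, on the data.
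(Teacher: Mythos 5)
Your proof is correct. Note first that the paper itself offers no proof of this proposition: it is imported verbatim as Theorem~7 of \citet{ray2019variational}, so any derivation you give is necessarily ``different'' from the paper's. Your route --- Donsker--Varadhan with the test function $h=\tfrac{\delta_n}{2}\mathbf{1}_{B_n}$, then $\log(1+x)\le x$, then Fubini/monotonicity under $\E_{\beta^\star}$ --- is sound at every step, and it has the pleasant feature of reproducing the stated bound \emph{exactly}, with the $Ce^{-\delta_n/2}$ term inside the braces (hence divided by $\delta_n/2$). For comparison, the argument in Ray and Szab\'o proceeds by a case split on the data event $\{\pi^n(B_n)>e^{-\delta_n/2}\}$, whose $\prob_{\beta^\star}$-probability is at most $Ce^{-\delta_n/2}$ by Markov, combined with the binary (data-processing) form of the KL inequality, $q(B)\log\frac{q(B)}{\pi(B)}+(1-q(B))\log\frac{1-q(B)}{1-\pi(B)}\le K(q,\pi)$, on the complement; that version of the argument naturally produces an extra additive constant (e.g.\ a $\log 2$ or $e^{-1}$) alongside $\E_{\beta^\star}K(q^n,\pi^n)$, whereas your single application of the Gibbs variational principle avoids it. Your two housekeeping remarks are also the right ones: the inequality is trivially true when $q^n\not\ll\pi^n$ since then $K(q^n,\pi^n)=+\infty$, and the only role of the ELBO-maximizing definition of $q^n$ is downstream, in bounding $\E_{\beta^\star}K(q^n,\pi^n)$ by comparison with a convenient element of $\Qset$. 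One could quibble that measurability of $y\mapsto K(q^n,\pi^n)$ deserves a word more than ``explicit, measurable dependence of $\hat\theta$ on the data'' (an argmax over $\theta$ requires a measurable-selection remark), but this is standard and does not affect the result.
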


Since the properties of interest here concern probabilities assigned by $q^n$ to certain events, Proposition~\ref{prop:Th7} provides us with a strategy to prove these claims: first, establish exponential inequalities for the $\pi^n$-probability assigned to those relevant events and, second, bound the Kullback--Leibler divergence.  Fortunately, the aforementioned exponential inequalities have been established elsewhere; see \citet{martin2017empirical} and \citet{martin2019empiricalpredict}.  Below is a summary. 

Briefly, let $\pi^n$ denote the posterior distribution based on the empirical prior as described in Section~\ref{SS:posterior} above, with $\sigma^2$ taken to be known.

\begin{prop}
\label{prop:post_pred}
For $\eps_n^2$ in \eqref{eq:rate} and any sequence $M_n$, with $M_n \to \infty$, the posterior distribution $\pi^n$ satisfies 
\[ \E_{\beta^\star} \pi^n(\{\beta \in \RR^p: \|X(\beta-\beta^\star)\|_2^2 > M_n \eps_n^2(\beta^\star)\}) \lesssim e^{-M_n' \eps_n^2(\beta^\star)}, \]
for all $\beta^\star$ with $|S_{\beta^\star}|=o(n)$, where $M_n'$ is another sequence with $M_n' \sim M_n$.  
\end{prop}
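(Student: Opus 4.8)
The plan is first to observe that Proposition~\ref{prop:post_pred} is, up to notation, a restatement of the prediction-error concentration bounds already established in \citet{martin2017empirical} for known $\sigma^2$ (and in \citet{martin2019empiricalpredict} for the inverse-gamma formulation). So the cleanest route is to check that the present hypotheses \eqref{eq:nps}, \eqref{eq:fullrank}, \eqref{eq:eigen} imply the conditions required there, identify $\eps_n^2(\beta^\star)$ in \eqref{eq:rate} with their rate, and quote the corresponding theorem. For a self-contained argument I would reproduce the standard ratio decomposition. Write $\pi^n(B_n) = N_n/D_n$ with $B_n = \{\beta\in\RR^p: \|X(\beta-\beta^\star)\|_2^2 > M_n\eps_n^2(\beta^\star)\}$, $N_n = \sum_S \int_{\{\beta_S:(S,\beta_S)\in B_n\}} \tilde\pi^n(S,\beta_S)\,d\beta_S$ the numerator, and $D_n = \sum_S \int \tilde\pi^n(S,\beta_S)\,d\beta_S$ the normalizing constant. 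For a constant $b>0$ to be chosen, bound $\E_{\beta^\star}\pi^n(B_n) \leq \prob_{\beta^\star}(D_n < e^{-b\,\eps_n^2(\beta^\star)}) + e^{b\,\eps_n^2(\beta^\star)}\,\E_{\beta^\star}N_n$.

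For the denominator, restrict the sum over configurations to the single term $S = S^\star = S_{\beta^\star}$. Because the conditional prior $\pi_n(\beta_{S^\star}\mid S^\star) = \nm(\hat\beta_{S^\star},\gamma^{-1}\sigma^2(X_{S^\star}^\top X_{S^\star})^{-1})$ is conjugate and centered at the least-squares estimator, the integral over $\beta_{S^\star}$ is the Gaussian integral already recorded after \eqref{joint_post}, namely $(\tfrac{\gamma}{\alpha+\gamma})^{|S^\star|/2}\exp\{-\tfrac{\alpha}{2\sigma^2}\|y-\hat y_{S^\star}\|_2^2\}$ times $\pi(S^\star)$. Using $\pi(S^\star) = \binom{p}{|S^\star|}^{-1}f_n(|S^\star|) \gtrsim \exp\{-c|S^\star|\log(p/|S^\star|)\}$ from \eqref{eq:prior.size} together with $|S^\star|\log(p/|S^\star|) = \eps_n^2(\beta^\star)$, and a chi-square tail bound for $\sigma^{-2}\|y-\hat y_{S^\star}\|_2^2$ (which has $n-|S^\star|$ degrees of freedom under $\prob_{\beta^\star}$), one gets $\prob_{\beta^\star}(D_n < e^{-b\,\eps_n^2(\beta^\star)}) \to 0$ for a suitable fixed $b$.

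For the numerator, apply Fubini and use the Gaussian Rényi-divergence identity $\E_{\beta^\star}\{L_n^\alpha(S,\beta_S)\} = \exp\{-\tfrac{\alpha(1-\alpha)}{2\sigma^2}\|X_S\beta_S - X\beta^\star\|_2^2\}$. On $B_n$ the exponent is at most $-\tfrac{\alpha(1-\alpha)}{2\sigma^2}M_n\eps_n^2(\beta^\star)$, the remaining integral of the empirical Gaussian prior over $\beta_S$ is at most $1$, and summing the prior mass over the $\binom{p}{s}$ configurations of each size $s$ against $f_n(s)\propto c^{-s}p^{-as}$ contributes only a factor $e^{O(\eps_n^2(\beta^\star))}$. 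Hence $\E_{\beta^\star}N_n \lesssim e^{-c' M_n\eps_n^2(\beta^\star)}$. Taking $b < c' M_n$ in the decomposition of the first paragraph and combining the two estimates gives the claim with $M_n' \sim M_n$.

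\textbf{Main obstacle.} The delicate step is the numerator bound: after integrating $\beta_S$ against the \emph{data-dependent} empirical prior centered at $\hat\beta_S$, one must verify that the surviving marginal penalty on $S$ is expressed through the fitted values $\|X_S\hat\beta_S - X\beta^\star\|_2$ (equivalently the projection residual) rather than through $\beta_S$ directly, and that the combinatorial sum over configurations of a priori unbounded size does not swamp the exponential gain. This is exactly the content of the lemmas in \citet{martin2017empirical}; in a self-contained write-up I would follow their two-stage device — first bound $\E_{\beta^\star}\pi^n(\{|S_\beta| \gtrsim \eps_n^2(\beta^\star)/\log p\})$ so that afterward all sums are restricted to configurations of controlled size, where the bounds above are uniform — rather than attempting the prediction-error bound in one pass.
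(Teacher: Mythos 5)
The paper does not actually prove this proposition: Appendix~\ref{summarytheory} states it as a background result and defers entirely to \citet{martin2017empirical} and \citet{martin2019empiricalpredict}, so your first paragraph --- verify the hypotheses, identify $\eps_n^2(\beta^\star)$ with their rate, and quote their theorem --- is exactly the paper's route, and the overall architecture of your self-contained sketch (ratio decomposition, denominator lower bound via the single term $S=S^\star$, numerator bound via a R\'enyi-type identity plus a combinatorial sum over configurations) is the right one and is the one used in the cited source and in the paper's own Appendix~\ref{App:proof}.

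However, the self-contained sketch as written has a concrete error: the identity and the denominator threshold must be stated for the likelihood \emph{ratio} $R_n(\beta_{S+}) = L_n(\beta_{S+})/L_n(\beta^\star)$, not the raw likelihood. For the raw likelihood, $\E_{\beta^\star}\{L_n^\alpha(S,\beta_S)\} = (1+\alpha)^{-n/2}\exp\{-\tfrac{\alpha}{2(1+\alpha)\sigma^2}\|X_S\beta_S - X\beta^\star\|_2^2\}$, not the $\alpha(1-\alpha)$ form you quote, and more importantly your raw-likelihood $D_n$ is of order $e^{-\alpha n/2}$ (since $\|y-\hat y_{S^\star}\|^2\sim\sigma^2\chi^2_{n-|S^\star|}$ is of order $n$), so $\prob_{\beta^\star}(D_n < e^{-b\eps_n^2(\beta^\star)})\to 1$ rather than $0$ and the decomposition collapses. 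Dividing numerator and denominator by $L_n^\alpha(\beta^\star)$ repairs both: the R\'enyi identity $\E_{\beta^\star}R_n^\alpha = \exp\{-\tfrac{\alpha(1-\alpha)}{2\sigma^2}\|X_S\beta_S-X\beta^\star\|_2^2\}$ then holds, and the denominator bound becomes \emph{deterministic} --- since $\|y-\hat y_{S^\star}\|^2\le\|y-X\beta^\star\|^2$, one gets $D_n\ge\pi(S^\star)(\tfrac{\gamma}{\alpha+\gamma})^{|S^\star|/2}\gtrsim e^{-c\,\eps_n^2(\beta^\star)}$ with no chi-square tail bound needed (this is exactly the computation in Appendix~\ref{App:proof}). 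Separately, in the numerator you cannot take $\E_{\beta^\star}$ of $R_n^\alpha$ alone and then say "the remaining integral of the empirical prior is at most $1$": the prior $\pi_n(\beta_S\mid S)$ is centered at $\hat\beta_S$ and hence is itself a function of $y$, so the expectation of the product requires H\"older's inequality (Lemma~2 of \citealp{martin2017empirical}), yielding a factor $\psi^{|S|}$ that must then be absorbed into the configuration sum. You correctly flag this as the main obstacle, but the sketch should route through it rather than around it.
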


\begin{prop}
\label{prop:effective_dimension}
Under the setup of Proposition~\ref{prop:post_pred}, for any sequence $M_n > 1$ with $M_n \to \infty$, the posterior $\pi^n$ satisfies 
\[ \E_{\beta^\star} \pi^n(\{\beta \in \RR^p: |S_\beta| > M_n |S_{\beta^\star}|\}) \lesssim e^{-M_n' |S_{\beta^\star}| \log n}, \]
for all $\beta^\star$ with $|S_{\beta^\star}|=o(n)$, where $M_n'$ is another sequence with $M_n' \sim M_n$.  
\end{prop}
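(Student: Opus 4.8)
The plan is to reduce the statement, which concerns the posterior $\pi^n$ rather than its variational approximation, to a comparison of the explicit unnormalized marginal posterior masses $\tilde\pi^n(S)$ across configurations $S$, following the argument of \citet{martin2017empirical,martin2019empiricalpredict}. First I would note that the event $\{\beta : |S_\beta| > M_n|S_{\beta^\star}|\}$ depends on $\beta$ only through its configuration, so that, with $S^\star = S_{\beta^\star}$ and $P_S$ the orthogonal projection onto the column space of $X_S$ (so $\hat y_S = P_S y$),
\[ \pi^n(\{\beta : |S_\beta| > M_n|S^\star|\}) = \frac{\sum_{|S|>M_n|S^\star|}\tilde\pi^n(S)}{\sum_S \tilde\pi^n(S)} \leq \frac{1}{\tilde\pi^n(S^\star)}\sum_{s>M_n|S^\star|}\;\sum_{|S|=s}\tilde\pi^n(S), \]
the denominator having been lower-bounded by its single term at $S^\star$. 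Plugging in $\tilde\pi^n(S) = \binom{p}{|S|}^{-1} f_n(|S|)\,(\gamma/(\alpha+\gamma))^{|S|/2}\exp\{-\tfrac{\alpha}{2\sigma^2}\|y-\hat y_S\|^2\}$, the factor $\binom{p}{s}$ cancels against the number of size-$s$ configurations, leaving the task of controlling $\sum_{|S|=s}\exp\{-\tfrac{\alpha}{2\sigma^2}\|y-\hat y_S\|^2\}$ relative to $\exp\{-\tfrac{\alpha}{2\sigma^2}\|y-\hat y_{S^\star}\|^2\}$ for each $s > M_n|S^\star|$.

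The key step is a uniform-in-$S$ lower bound on the residual sum of squares. Writing $y = X\beta^\star + \sigma\epsilon$ and using that $\beta^\star$ is supported on $S^\star$, one has for every $S$ that $\|y-\hat y_S\|^2 \geq \|(I-P_{S\cup S^\star})y\|^2 = \|y-\hat y_{S^\star}\|^2 - \sigma^2\|(P_{S\cup S^\star}-P_{S^\star})\epsilon\|^2$, and the subtracted term is $\sigma^2$ times a chi-square random variable with at most $|S|$ degrees of freedom (the regime $|S\cup S^\star|>n$ being handled separately, where the prior penalty alone suffices). A standard chi-square tail bound together with a union bound over the $\binom{p}{s}\leq p^s$ configurations of size $s$, and over $s$, shows that outside an event of $\prob_{\beta^\star}$-probability at most $e^{-M_n''' |S^\star|\log p}$ we have $\|(P_{S\cup S^\star}-P_{S^\star})\epsilon\|^2 \leq C' s\log p$ simultaneously for all $S$ with $|S|=s>M_n|S^\star|$, where the constant $C'$ can be taken as large as we please. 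On this event,
\[ \frac{\sum_{|S|=s}\tilde\pi^n(S)}{\tilde\pi^n(S^\star)} \leq \binom{p}{|S^\star|}\,(cp^a)^{-(s-|S^\star|)}\Bigl(\tfrac{\gamma}{\alpha+\gamma}\Bigr)^{(s-|S^\star|)/2} e^{\frac{\alpha C'}{2}s\log p}, \]
and since $\binom{p}{|S^\star|}\leq p^{|S^\star|}$, the exponent of $p$ is $|S^\star|(1+a) + s(\tfrac{\alpha C'}{2}-a)$, which is $\leq -\eta s$ for some $\eta>0$ as soon as $a > \alpha C'/2$ and $M_n$ is large enough that $s\geq M_n|S^\star|$ forces the term $s(\tfrac{\alpha C'}{2}-a)$ to dominate $|S^\star|(1+a)$.

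Summing the resulting geometric-type series over $s>M_n|S^\star|$ then gives $\pi^n(\{|S_\beta|>M_n|S^\star|\}) \lesssim e^{-M_n'' |S^\star|\log p}$ on the good event, with $M_n''\sim M_n$; taking expectation, adding the $e^{-M_n'''|S^\star|\log p}$ probability of the bad event, and using $\log p \geq \log n$ (recall $R=n$), yields the claimed $\lesssim e^{-M_n'|S_{\beta^\star}|\log n}$ with $M_n'\sim M_n$. Uniformity over all $\beta^\star$ with $|S_{\beta^\star}|=o(n)$ is immediate since none of the constants depend on $\beta^\star$. The main obstacle is precisely the uniform-in-$S$ control of the residual sums of squares: larger models always fit better, so the $p^s$-fold union bound over same-size configurations must be absorbed by the prior's geometric penalty $p^{-as}$, which is what forces $f_n$ (equivalently, the hyperparameter $a$) to decay sufficiently fast; this is the only part of the argument that requires genuine care, and it is exactly the role played by the corresponding condition on $f_n$ in \citet{martin2017empirical}.
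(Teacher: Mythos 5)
First, note that the paper does not prove Proposition~3 itself: it is quoted as background from \citet{martin2017empirical} and \citet{martin2019empiricalpredict}, where the argument runs through the fractional likelihood. There, one writes $\pi^n(|S_\beta|>M_n|S^\star|)=N_n/D_n$, lower-bounds the denominator $D_n$ by an $e^{-c|S^\star|\log p}$ term (Lemma~1 of that paper), and bounds the expected numerator by $\sum_{s>M_n|S^\star|} f_n(s)\,\psi^{s}$ using the fact that, for $\alpha<1$, $\int \E_{\beta^\star}\{R_n^\alpha(\beta_{S+})\,\pi_n(\beta_S\mid S)\}\,d\beta_S\le\psi^{|S|}$ by H\"older's inequality (their Lemma~2). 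The point of that device is that no high-probability event and no union bound over configurations is ever needed: the expectation of the $\alpha$-power likelihood ratio is controlled configuration by configuration, and the prior factor $(cp^a)^{-s}$ alone makes the sum over $s$ geometrically small for \emph{any} $a>0$.

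Your route is genuinely different --- a deterministic RSS comparison plus a chi-square tail bound and a union bound over the $\binom{p}{s}$ configurations of each size --- and it contains a gap that you flag but do not resolve. To make the bad event $\{\exists\, S,\ |S|=s:\ \|(P_{S\cup S^\star}-P_{S^\star})\epsilon\|^2>C's\log p\}$ have small probability after the union bound over roughly $p^{s}$ configurations, you need the chi-square tail exponent to exceed $s\log p$, which forces $C'$ to be at least an absolute constant ($C'>2$, say). But your good-event bound then requires $a>\alpha C'/2$, i.e.\ $a>\alpha$ (and $\alpha$ is taken arbitrarily close to $1$ in this framework), whereas the proposition is asserted for the prior \eqref{eq:prior.size} with \emph{any} positive $a$ --- the paper even uses $a=0.05$ in its experiments. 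These two requirements on $C'$ pull in opposite directions and cannot be satisfied simultaneously under the stated hypotheses, so your argument proves the claim only under an unstated strong condition on the prior hyperparameter. This is not merely a matter of "genuine care": it is precisely the obstruction that the fractional-posterior argument of \citet{martin2017empirical} is designed to circumvent, since $\E_{\beta^\star}R_n^\alpha(\beta_{S+})\le 1$ makes the union-bound step unnecessary. (Your first reduction, the identity $\|y-\hat y_S\|^2\ge\|y-\hat y_{S^\star}\|^2-\sigma^2\|(P_{S\cup S^\star}-P_{S^\star})\epsilon\|^2$, and the combinatorial bookkeeping are all correct; it is only the trade-off between $C'$ and $a$ that fails.)
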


\begin{prop}
\label{prop:post_est}
Under the setup in Proposition~\ref{prop:post_pred}, for any sequence $M_n$ such that $M_n \to \infty$, the posterior $\pi^n$ satisfies 
\[ \E_{\beta^\star} \pi^n\Bigl( \Bigl\{\beta\in \RR^p: \|\beta-\beta^\star\|^2_2 > \frac{M_n \eps_n^2(\beta^\star)}{\kappa_X^2(C|S_{\beta^\star}|)} \Bigr\}\Bigr) \lesssim e^{-M_n' \eps_n^2(\beta^\star)},\]
for all $\beta^\star$ such that $|S_{\beta^\star}|=o(n)$ and $\kappa_X(C|S_{\beta^\star}|) > 0$, for $\kappa_X$ in \eqref{eq:kappa} and a given constant $C > 2$, where $M_n'$ is another sequence with $M_n' \sim M_n$.  
\end{prop}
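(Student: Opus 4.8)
The plan is to obtain this $\ell_2$ estimation rate as a corollary of the prediction-error rate in Proposition~\ref{prop:post_pred} together with the effective-dimension bound in Proposition~\ref{prop:effective_dimension}, converting from $\|X(\beta-\beta^\star)\|_2$ to $\|\beta-\beta^\star\|_2$ by means of the sparse singular value $\kappa_X$ in \eqref{eq:kappa}. The elementary fact driving the argument is this: if $\beta \in \RR^p$ is such that $S_\beta \cup S_{\beta^\star}$ has cardinality at most $C|S_{\beta^\star}|$, then $\beta-\beta^\star$ is supported on a set of size at most $C|S_{\beta^\star}|$, so by the definition of $\kappa_X$ in \eqref{eq:kappa} and the fact that $\kappa_X(\cdot)$ is non-increasing we get $\|X(\beta-\beta^\star)\|_2 \geq \kappa_X(C|S_{\beta^\star}|)\,\|\beta-\beta^\star\|_2$, i.e., $\|\beta-\beta^\star\|_2^2 \leq \kappa_X^{-2}(C|S_{\beta^\star}|)\,\|X(\beta-\beta^\star)\|_2^2$ whenever $\kappa_X(C|S_{\beta^\star}|)>0$.

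Concretely, fix $C>2$ as in the statement and write the target bad event as $B_n = \{\beta: \|\beta-\beta^\star\|_2^2 > M_n \eps_n^2(\beta^\star)\,\kappa_X^{-2}(C|S_{\beta^\star}|)\}$. I would show the inclusion $B_n \subseteq B_n' \cup B_n''$, where $B_n' = \{\beta: |S_\beta| > (C-1)|S_{\beta^\star}|\}$ and $B_n'' = \{\beta: \|X(\beta-\beta^\star)\|_2^2 > M_n \eps_n^2(\beta^\star)\}$: on $(B_n')^c$ one has $|S_\beta \cup S_{\beta^\star}| \leq (C-1)|S_{\beta^\star}| + |S_{\beta^\star}| = C|S_{\beta^\star}|$, so if in addition $\beta \in (B_n'')^c$ the displayed sparse-eigenvalue inequality gives $\|\beta-\beta^\star\|_2^2 \leq \kappa_X^{-2}(C|S_{\beta^\star}|)\,\|X(\beta-\beta^\star)\|_2^2 \leq M_n \eps_n^2(\beta^\star)\,\kappa_X^{-2}(C|S_{\beta^\star}|)$, i.e., $\beta \notin B_n$. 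It then suffices to bound $\E_{\beta^\star}\pi^n(B_n')$ and $\E_{\beta^\star}\pi^n(B_n'')$ separately: the second is $\lesssim e^{-M_n'\eps_n^2(\beta^\star)}$ directly from Proposition~\ref{prop:post_pred}, and the first is exponentially small by the effective-dimension control of Proposition~\ref{prop:effective_dimension} (in the fixed-multiple form available in \citet{martin2017empirical}); summing the two bounds gives the claim.

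The main obstacle is reconciling the effective-dimension control with the \emph{fixed} constant $C$ that appears inside $\kappa_X(C|S_{\beta^\star}|)$. Proposition~\ref{prop:effective_dimension}, as stated here with a diverging sequence $M_n$, is not directly what is needed; instead one must invoke the sharper statement from \citet{martin2017empirical} that $\E_{\beta^\star}\pi^n(\{|S_\beta| > C_0|S_{\beta^\star}|\})$ is exponentially small for a suitable fixed constant $C_0$, and then take $C$ large enough that $C_0 \leq C-1$. The remaining work is purely bookkeeping on the exponents: one checks that the exponential tail from the effective-dimension event combines with the $e^{-M_n'\eps_n^2(\beta^\star)}$ tail from the prediction-error event into a single bound of the advertised form, using that $\eps_n^2(\beta^\star)=|S_{\beta^\star}|\log(p/|S_{\beta^\star}|)$ is comparable to $|S_{\beta^\star}|\log n$ up to constants and that $\eps_n^2(\beta^\star)=o(n)$ under \eqref{eq:nps}, so that the slower-decaying of the two exponentials is still $\lesssim e^{-M_n'\eps_n^2(\beta^\star)}$ for some $M_n' \sim M_n$. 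Everything else is routine.
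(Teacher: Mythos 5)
First, a point of comparison: the paper does not actually prove this proposition --- Appendix~\ref{summarytheory} presents it as a summary of results established in \citet{martin2017empirical} and \citet{martin2019empiricalpredict} --- so the relevant benchmark is the proof in those references, and your route is exactly theirs: split the bad event into a large-effective-dimension event $B_n'$ and a large-prediction-error event $B_n''$, and on $(B_n')^c$ use the definition of $\kappa_X$ in \eqref{eq:kappa} to pass from $\|X(\beta-\beta^\star)\|_2$ to $\|\beta-\beta^\star\|_2$. Your inclusion $B_n \subseteq B_n' \cup B_n''$ is correct, since on $(B_n')^c$ the vector $\beta-\beta^\star$ is supported on at most $|S_\beta|+|S_{\beta^\star}| \leq C|S_{\beta^\star}|$ coordinates, whence $\|X(\beta-\beta^\star)\|_2 \geq \kappa_X(C|S_{\beta^\star}|)\,\|\beta-\beta^\star\|_2$. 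You are also right that Proposition~\ref{prop:effective_dimension} as stated (diverging $M_n$) is not the form you need, and that one must import the fixed-multiple dimension bound from the source.

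The genuine gap is the step you wave off as ``purely bookkeeping.'' With a \emph{fixed} multiple $C-1$, the available bound on $\E_{\beta^\star}\pi^n(\{|S_\beta|>(C-1)|S_{\beta^\star}|\})$ has the form $e^{-c_0\,\eps_n^2(\beta^\star)}$ for a \emph{fixed} constant $c_0>0$: it comes from the prior tail $(c\,p^a)^{-t}$ evaluated at $t=(C-1)|S_{\beta^\star}|$ set against a denominator lower bound of order $e^{-d\,\eps_n^2(\beta^\star)}$, and making that exponent diverge relative to $\eps_n^2(\beta^\star)$ requires $t/|S_{\beta^\star}|\to\infty$ --- which is precisely why Proposition~\ref{prop:effective_dimension} carries a diverging $M_n$. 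Consequently the sum of your two tails is $e^{-M_n'\eps_n^2(\beta^\star)}+e^{-c_0\eps_n^2(\beta^\star)}$, and once $M_n'>c_0$ the second term dominates, so the combined bound is of order $e^{-c_0\eps_n^2(\beta^\star)}$ and is \emph{not} $\lesssim e^{-M_n'\eps_n^2(\beta^\star)}$ for any $M_n'\sim M_n\to\infty$. Your argument therefore delivers a fixed-rate exponential, strictly weaker than the advertised bound; the claim that the slower-decaying exponential ``is still $\lesssim e^{-M_n'\eps_n^2(\beta^\star)}$'' is false as written. This shortfall is at least partly inherited from the statement itself (the cited estimation theorem is stated as convergence to zero, and the fixed-rate exponential is what its proof yields), but your proposal asserts a conclusion that its two ingredients do not supply, and the discrepancy matters downstream, since the application of Proposition~\ref{prop:Th7} with $\delta_n=c_0\eps_n^2(\beta^\star)$ rather than $\delta_n=M_n'\eps_n^2(\beta^\star)$ no longer forces $\delta_n^{-1}\E_{\beta^\star}K(q^n,\pi^n)\to 0$.
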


\section{Proofs from Section~\ref{theory}}
\label{App:proof}

The essential step in the proofs of our three main results in Section~\ref{theory} is to bound the Kullback--Leibler divergence over a subset of the mean-field family $\Qset$.  Let $\Qset'$ denote the collection of all $q \in \Qset$ but with mixture weights $\phi_j$ that are either 0 or 1.  That is, $\Qset'$ consists of distributions that are products of normals and point masses---no mixtures.  And since $\Qset' \subset \Qset$, we have 
\[ \min_{q \in \Qset} K(q, \pi^n) \leq \min_{q \in \Qset'} K(q, \pi^n). \]
The advantage is that $\Qset'$ consists of simpler distributions so bounding the Kullback--Leibler divergence over $\Qset'$ is an easier task.  In fact, the minimum Kullback--Leibler divergence over $\Qset'$ is smaller than if we fix $\phi$ so that $\phi_j = 1$ for $j \in S^\star$ and $\phi_j = 0$ for $j \not\in S^\star$, where $S^\star=S_{\beta^\star}$.  Therefore, 
\begin{align*}
\min_{q \in \Qset'} & \, K(q, \pi^n) \\
& \leq \int \log \frac{\nm_{|S^\star|}(d\beta_{S^\star} \mid \mu_{S^\star}, \tau_{S^\star}^2) \otimes \delta_0(d\beta_{S^{\star c}})}{\sum_S \pi^n(S) \nm_{|S|}(d\beta_S \mid \hat\beta_S, V_{S^\star}) \otimes \delta_0(d\beta_{S^c})} \nm_{|S^\star|}(d\beta_{S^\star }\mid \mu_{S^\star}, \tau_{S^\star}^2) \otimes \delta_0(d\beta_{S^{\star c}}) \\
& = -\log \pi^n(S^\star) + K(q_{S^\star}, \pi_{S^\star}^n), 
\end{align*}
where $V_{S^\star} = \sigma^2(\alpha + \gamma)^{-1} (X_{S^\star}^\top X_{S^\star})^{-1}$, and $q_{S^\star}$ and $\pi_{S^\star}^n$ are the corresponding conditional distributions of $\beta_{S^\star}$, given $S=S^\star$.  A closed-form expression is available for the first term in the upper bound and, since both $q_{S^\star}$ and $\pi_{S^\star}^n$ are Gaussian, the second term can be evaluated too.  In what follows, we bound each of these two terms in turn. 

For the marginal $\pi^n$-probability at $S^\star$, recall that 
\[ \pi^n(S^\star) = \frac{\pi(S^\star) \, \bigl(\tfrac{\gamma}{\alpha+\gamma}\bigr)^{|S^\star|/2} \exp\bigl\{-\tfrac{\alpha}{2\sigma^2}\|y-\hat y_{S^\star}\|^2\bigr\}}{\sum_S \pi(S) \, \bigl(\tfrac{\gamma}{\alpha+\gamma}\bigr)^{|S|/2} \exp\bigl\{-\tfrac{\alpha}{2\sigma^2}\|y-\hat y_{S}\|^2\bigr\}}. \]
Since $\|y-\hat y_{S^\star}\|^2 \leq \|y - X\beta^\star\|^2$, we get 
\[ \pi^n(S^\star) \geq D_n^{-1} \pi(S^\star) \, \bigl(\tfrac{\gamma}{\alpha+\gamma}\bigr)^{|S^\star|/2}, \]
where 
\begin{align*}
D_n & = \sum_S \pi(S) \, \bigl(\tfrac{\gamma}{\alpha+\gamma}\bigr)^{|S|/2} \exp\bigl[\tfrac{\alpha}{2\sigma^2}\{\|y-X\beta_{S+}\|^2 - \|y-\hat y_{S}\|^2\} \bigr] \\
& = \sum_S \pi(S) \int R_n^\alpha(\beta_{S+}) \, \pi_n(\beta_S \mid S) \, d\beta_S, 
\end{align*}
with $R_n(\beta_{S+}) = L_n(\beta_{S+})/L_n(\beta^\star)$ and $\beta_{S+} = (\beta_S, 0_{S^c})$, the $p$-vector with zeros filled in around the non-zero $\beta_S$.  This quantity $D_n$ is precisely the denominator of the posterior distribution $\pi^n$ that appears in Lemma~1 of \citet{martin2017empirical}.  Taking negative logarithm and then expectation, gives
\[ \E_{\beta^\star} \{-\log \pi^n(S^\star)\} \leq -\log \pi(S^\star) + |S^\star| \bigl(\tfrac12 \log\tfrac{\alpha + \gamma}{\gamma}\bigr) + \E_{\beta^\star} \log D_n. \]
By Jensen's inequality, 
\[ \E_{\beta^\star} \log D_n \leq \log \E_{\beta^\star} D_n = \sum_S \pi(S) \int \E_{\beta^\star} \{R_n^\alpha(\beta_{S+}) \, \pi_n(\beta_S \mid S)\} \, d\beta_S. \]
The same H\"older's inequality argument in Lemma~2 of \citet{martin2017empirical} can be used to show that the integral on the right-hand side above equals $\psi^{|S|}$, where $\psi$ is a constant that depends on $(\alpha, \gamma, \sigma^2)$ only.  Since the prior for $|S|$ has very thin tails, the sum on the right-hand side is uniformly bounded.  Therefore, the dominant term in the upper bound for $\E_{\beta^\star}\{-\log \pi^n(S^\star)\}$ is $-\log \pi(S^\star)$, and since 
\[ \log \binom{p}{s} \leq s \log(ep/s), \]
we get 
\[ \E_{\beta^\star} \{-\log \pi^n(S^\star)\} \lesssim |S^\star| \log(p / |S^\star|) + a |S^\star| \log p. \]
Since 
\[ |S^\star| \log p = \Bigl( 1 + \frac{\log |S^\star|}{\log(p / |S^\star|)} \Bigr) |S^\star| \log(p / |S^\star|), \]
and the ratio inside the parentheses is bounded, we get $\E_{\beta^\star}\{-\log \pi^n(S^\star)\} \lesssim \eps_n^2(\beta^\star)$.  

Next, the second term, $K(q_{S^\star}, \pi_{S^\star}^n)$, in the above upper bound is the Kullback--Leibler divergence between two $|S^\star|$-variate Gaussians, so a direct calculation is possible.  For $q_{S^\star}$, we are free to choose the mean vector and diagonal covariance matrix as we please and, of course, we set the mean vector equal to that of $\pi_{S^\star}^n$, so we get 
\[ K(q_{S^\star}, \pi_{S^\star}^n) = \tfrac12 \bigl\{ \log|V_{S^\star} \Delta_{S^\star}^{-1}| - |S^\star| + \trace(V_{S^\star}^{-1} \Delta_{S^\star}) \bigr\}, \]
where $\Delta_{S^\star}$ is a diagonal matrix.  If we set 
\[ \Delta_{S^\star} = \{\text{diag}(V_{S^\star}^{-1})\}^{-1}, \]
then $\trace(V_{S^\star}^{-1} \Delta_{S^\star}) = |S^\star|$ and 
\[ |\Delta_{S^\star}^{-1}| = \prod_{j=1}^{|S^\star|} (V_{S^\star}^{-1})_{jj} \leq \Bigl( \frac{\alpha + \gamma}{\sigma^2} \Bigr)^{|S^\star|} \prod_{j=1}^{|S^\star|} (X_{S^\star}^\top X_{S^\star})_{jj} = \Bigl\{ \frac{n(\alpha + \gamma)}{\sigma^2} \Bigr\}^{|S^\star|}, \]
where the last equality follows as a result of how we standardized of the columns of $X$.  Finally, we have that 
\[ \log |V_{S^\star} \Delta_{S^\star}^{-1}| \leq |S^\star| \log\bigl\{ \lambda_{\min}^{-1} (n^{-1} X_{S^\star}^\top X_{S^\star}) \bigr\}. \]
According to \eqref{eq:eigen}, the eigenvalue is lower bounded by $|S_{\beta^\star}| p^{-1}$ and, therefore, the right-hand side is upper bounded by $\eps_n^2(\beta^\star)$.

To summarize, we have a bound of order $\eps_n^2 = \eps_n^2(\beta^\star)$ on $K(q^n, \pi^n)$ and the exponential inequalities in Propositions~\ref{prop:post_pred}--\ref{prop:post_est}.  Applying the result in Theorem~7 of \citet{ray2019variational}, presented above in Proposition~\ref{prop:Th7}, the proofs of the three main theorems follow directly.  For example, according to Propositions~\ref{prop:Th7} and \ref{prop:post_est},
\begin{align*}
\E_{\beta^*} q^n(\{\beta: \|X(\beta-\beta^\star)\|_2^2>M_n\eps_n^2\})&\lesssim \frac{\E_{\beta^\star}K(q^n,\pi^n)+e^{-M_n\eps_n^2/2}}{M_n\eps_n^2}\\
& \lesssim  (M_n\eps_n^2)^{-1}\{ \eps_n^2+e^{-M_n\eps_n^2/2}\}\to 0.
\end{align*}
This proves Theorem~\ref{prediction_error}; Theorems~\ref{effective dimension} and \ref{estimation_error} can be proved similarly.

\section{Proofs from Section~\ref{SS:theory.orthogonal}}

\begin{proof}[Proof of Theorem~\ref{thm:selection}]
If $q^n$ is the variational approximation, for any $S$ we can write 
\[ q^n(S) = \prod_{i \in S} \phi_i \, \prod_{i \not\in S} (1-\phi_i), \]
where $\phi_1,\ldots,\phi_n$ are the weights given by 
\[ \logit(\phi_i) = \logit(\lambda_n) - \log z + \tfrac{\alpha}{2} y_i^2, \quad i=1,\ldots,n, \]
determined by minimizing the Kullback--Leibler divergence, where $\lambda_n$ is as in \eqref{eq:lambda} and $z=(1+\alpha \gamma^{-1})^{1/2}$.  Note that both $z$ and $\lambda_n$ do not depend on data.  If we write $S^\star = S_{\beta^\star}$, then we get the following convenient bound
\[ q^n(S) \leq \frac{q^n(S)}{q^n(S^\star)} = \prod_{i \in S \cap S^{\star c}} e^{\logit(\phi_i)} \, \prod_{i \in S^c \cap S^\star} e^{-\logit(\phi_i)}. \]
Since each $\phi_i$ only depends on $y_i$, and these are independent, we can interchange the order of expectation and product.  Also, for those $i \in S^{\star c}$, with $\beta_i^\star=0$, the $\phi_i$'s are iid, so each term in that product has the same expectation.  Therefore, 
\[ \E_{\beta^\star} q^n(S) \leq \bigl\{ \E_0 e^{\logit(\phi_1)} \bigr\}^{|S \cap S^{\star c}|} \, \prod_{i \in S^c \cap S^\star} \E_{\beta_i^\star} e^{-\logit(\phi_i)}. \]
Using the moment generating function formulas for the central and non-central chi-square distributions, it is easy to check that 
\begin{align*}
\E_0 e^{\logit(\phi_1)} & = \exp\{\logit(\lambda_n) - \log z - \tfrac12 \log (1-\alpha)\} \\
\E_{\beta_i^\star} e^{-\logit(\phi_i)} & = \exp\{-\logit(\lambda_n) + \log z - \tfrac12\log(1 + \alpha) -k_\alpha \beta_i^{\star 2} \}, 
\end{align*}
where $k_\alpha = \frac{\alpha}{2(1+\alpha)}$.  We consider two distinct cases separately, namely, $S \supset S^\star$ and $S \not\supseteq S^\star$.  First, for any $S \supset S^\star$, we have that $|S^c \cap S^\star|=0$.  So, 
\begin{align*}
\E_{\beta^\star} q^n(\{S: S \supset S^\star\}) & \leq \sum_{S: S \supset S^\star, |S| \leq C|S^\star|} \{ \E_0 e^{\logit(\phi_1)} \}^{|S \cap S^{\star c}|} \\
& = \sum_{t=1}^{(C-1)|S^\star|} \binom{n-|S^\star|}{t} \{ \E_0 e^{\logit(\phi_1)} \}^t \\
& \leq \sum_{t=1}^{(C-1)|S^\star|} \{ e(n - |S^\star|)\E_0 e^{\logit(\phi_1)} \}^t \\ 
& \lesssim n e^{\logit(\lambda_n)}.
\end{align*}
For $\lambda_n$ as in \eqref{eq:lambda}, the upper bound is vanishing as $n \to \infty$.  Next, for any $S \not\supseteq S^\star$, we know that there is at least one component in $S^\star$ that is {\em not included} in $S$.  So, if we set $\Delta = \min_{i \in S^\star} |\beta_i^\star|$, then we get 
\begin{align*}
\E_{\beta^\star} & \, q^n(\{S: S \not\supseteq S^\star\}) \\
& \leq \sum_{S: S \not\supseteq S^\star, |S| \leq C|S^\star|} \Bigl[ \bigl\{ \E_0 e^{\logit(\phi_1)} \bigr\}^{|S \cap S^{\star c}|} \, \prod_{i \in S^c \cap S^\star} \E_{\beta_i^\star} e^{-\logit(\phi_i)} \Bigr] \\
& \leq \sum_{S: S \not\supseteq S^\star, |S| \leq C|S^\star|} \{c_0 e^{\logit(\lambda_n)}\}^{|S \cap S^{\star c}|} \{c_1 e^{-\logit(\lambda_n)-k_\alpha \Delta^2}\}^{|S^c \cap S^\star|} \\
& = \sum_{s=0}^{C|S^\star|} \sum_{t=0}^{s \wedge (|S^\star|-1)} \binom{|S^\star|}{t} \binom{n-|S^\star|}{s-t} \{c_0 e^{\logit(\lambda_n)}\}^{s-t} \{c_1 e^{-\logit(\lambda_n)-k_\alpha \Delta^2}\}^{|S^\star|-t} \\ 
& \leq \sum_{s=0}^{C|S^\star|} \sum_{t=0}^{s \wedge (|S^\star|-1)} \{c_0 (n-|S^\star|)e^{\logit(\lambda_n)}\}^{s-t} \{c_1 |S^\star| e^{-\logit(\lambda_n)-k_\alpha \Delta^2}\}^{|S^\star|-t}. 
\end{align*}
(In the above derivation, $s$ represents $|S|$ and $t$ represents $|S \cap S^\star|$, which implies $s-t=|S \cap S^{\star c}|$ and $|S^\star|-t = |S^c \cap S^\star|$.)  Note that $t < |S^\star|$ because $S \not\supseteq S^\star$ implies that $S$ can't include all the entries in $S^\star$.  This means that there is a constant factor 
\[ |S^\star| e^{-\logit(\lambda_n)-k_\alpha \Delta^2}, \]
which goes to 0 as $n \to \infty$ if $\Delta$ is sufficiently large.  The terms involve 
\[ (n-|S^\star|) e^{\logit(\lambda_n)} \]
which also can be made to vanish as we discussed in the $S \supset S^\star$ case above.  So, with the exception of the common factor involving $\Delta$ above, all the terms are geometrically small and, hence, the sum is bounded.  Putting everything together, if the beta-min condition \eqref{eq:betamin} holds, then we can conclude that both $\E_{\beta^\star} q^n(\{S: S \supset S^\star\})$ and $\E_{\beta^\star} q^n(\{S: S \not\supseteq S^\star\})$ vanish, which proves the claim.  
\end{proof}

\begin{proof}[Proof of Theorem~\ref{thm:uq.interval}]
Define $D_n(A) = | q_\omega^n(A) - q_\omega^{n,\oracle}(A) |$ for Borel sets $A \subseteq \RR$.  Since $q_\omega^n$ is a mixture and $|\sum_i x_i| \leq \sum_i |x_i|$, we get the following upper bound:
\[ D_n(A) \leq \sum_S q^n(S) \bigl| \nm(A \mid \hat\omega_S, \sigma^2 v_\alpha \|w_S\|^2) - \nm(A \mid \hat\omega_{S^\star}, \sigma^2 v_\alpha \|w_{S^\star}\|^2) \bigr|. \]
The absolute difference is 0 when $S=S^\star$ and bounded by 2 otherwise, so the total variation distance between $q_\omega^n$ and $q_\omega^{n,\oracle}$ is upper bounded as follows:
\[ d_{\text{\sc tv}}\bigl( q^n_\omega, q_\omega^{n,\oracle} \bigr) := \sup_A |q_\omega^n(A) - q_\omega^{n,\oracle}(A)| \leq 2 \sum_{S \neq S^\star} q^n(S) = 2 \{1 - q^n(S^\star)\}. \]
Taking expectation of both sides and applying Theorem~\ref{thm:selection} gives 
\begin{equation}
\label{eq:tv}
\E_{\beta^\star} d_{\text{\sc tv}}( q^n_\omega, q_\omega^{n,\oracle} ) \to 0, \quad n \to \infty. 
\end{equation}
Towards the (non-)coverage result in \eqref{eq:interval.coverage}, recall that 
\[ \bar\omega_\zeta^\oracle = \hat\omega_{S^\star} + z_\zeta \sigma_\alpha \|w_{S^\star}\|, \]
where $\Phi(z_\zeta) = 1-\zeta$.  Also, $\hat\omega_{S^\star} \sim \nm(\omega^\star, \sigma^2 \|w_{S^\star}\|^2)$, where $\omega^\star = w^\top \beta^\star$.  Take any $t > 0$ and define $u(t) = \sigma \|w_{S^\star}\| t$.  Then we have  
\begin{align*}
\prob_{\beta^\star}(\bar\omega_\zeta < \omega^\star) & = \prob_{\beta^\star}\{\bar\omega_\zeta < \omega^\star, \bar\omega_\zeta^\oracle < \omega^\star + u(t)\} + \prob_{\beta^\star}\{\bar\omega_\zeta < \omega^\star, \bar\omega_\zeta^\oracle \geq \omega^\star + u(t)\} \\
& \leq \prob_{\beta^\star}\{\bar\omega_\zeta^\oracle < \omega^\star + u(t)\} + \prob_{\beta^\star}\{\bar\omega_\zeta < \omega^\star, \bar\omega_\zeta^\oracle \geq \omega^\star + u(t)\}. 
\end{align*}
Using the normal sampling distribution of $\hat\omega_{S^\star}$, we can easily see that the first term in the upper bound is $\Phi(t - v_\alpha^{1/2} z_\zeta)$.  Note that this is no more than $\zeta$ as $t \to 0$.  For the second term, note that the event in question determines a gap between the quantiles of $q_\omega^n$ and $q_\omega^{n,\oracle}$, which suggests a non-zero total variation distance.  Indeed, we know that 
\[ q_\omega^n((-\infty,\bar\omega_\zeta]) = 1-\zeta, \]
so if $\bar\omega_\zeta < \omega^\star$ and $\bar\omega_\zeta^\oracle \geq \omega^\star + u(t)$, then the $q_\omega^{n,\oracle}$-probability of that same interval satisfies 
\[ q_\omega^{n,\oracle}((-\infty, \bar\omega_\zeta]) \leq q_\omega^{n,\oracle}((-\infty,\omega^\star]) = (1-\zeta) - Q_\omega^{n,\oracle}([\omega^\star, \bar\omega_\zeta^\oracle]). \]
Since $\bar\omega_\zeta^\oracle$ is an upper quantile of $q_\omega^{n,\oracle}$ and the length of the interval $[\omega^\star, \bar\omega_\zeta^\oracle]$ is at least $u(t)$, the probability can be lower-bounded by 
\[ q_\omega^{n,\oracle}([\omega^\star, \bar\omega_\zeta^\oracle]) \geq q_\omega^{n,\oracle}([\hat\omega_{S^\star} + z_\zeta \sigma v_{\alpha}^{1/2}\|w_{S^\star}\| - u(t), \hat\omega_{S^\star} + z_\zeta \sigma v_{\alpha}^{1/2}\|w_{S^\star}\|]). \]
After standardizing, this probability is $\Phi(z_\zeta) - \Phi(z_\zeta - v_\alpha^{-1/2}t)$, so
\[ q_\omega^{n,\oracle}((-\infty, \bar\omega_\zeta]) \leq \Phi(z_\zeta - v_\alpha^{-1/2}t) < 1-\zeta. \]
This difference in probabilities implies a difference in total variation distance, i.e., 
\[ \prob_{\beta^\star}\{\bar\omega_\zeta < \omega^\star, \bar\omega_\zeta^\oracle \geq \omega^\star + u(t)\} \leq \prob_{\beta^\star}\{d_{\text{\sc tv}}( q^n_\omega, q_\omega^{n,\oracle} ) \geq (1-\zeta) - \Phi(z_\zeta - v_\alpha^{-1/2}t)\}. \]
Markov's inequality and \eqref{eq:tv} imply that the upper bound above vanishes.  Putting everything together, we have that 
\[ \limsup_{n \to \infty} \prob_{\beta^\star}(\bar\omega_\zeta < \xi^\star) \leq \Phi(t - v_\alpha^{1/2} z_\zeta) \quad \text{for any $t > 0$}. \]
But if the above inequality holds for all $t > 0$, then it must also hold for the infimum, and the right-hand side minimum value is $\Phi(-v_\alpha^{1/2} z_\zeta) \leq \zeta$, proving \eqref{eq:interval.coverage}. 
\end{proof}

\bibliography{reference}
\bibliographystyle{apalike}

\end{document}